\newcommand{\R}{\mathbb{R}}
\newcommand{\Z}{\mathbb{Z}}
\newcommand{\T}{\mathbb{T}}
\newcommand{\Q}{\mathbb{Q}}
\def\be{\begin{equation}}
\def\ee{\end{equation}}
\def\br{\begin{eqnarray}}
\def\er{\end{eqnarray}}
\theoremstyle{plain}
\newtheorem{theorem}{Theorem}[section]
\newtheorem{corollary}[theorem]{Corollary}
\newtheorem{lemma}[theorem]{Lemma}
\theoremstyle{definition}
\newtheorem{definition}[theorem]{Definition}
\newtheorem{remark}[theorem]{Remark}
\title[]{Sharp palindromic criterion for semi-uniform dynamical localization. }
\author{Svetlana Jitomirskaya}
\address[Svetlana Jitomirskaya]{ Department of Mathematics, University of California, Berkeley, California 94720, USA}
\email{sjitomi@berkeley.edu}
\author{Wencai Liu}
\address[Wencai Liu]{Department of Mathematics, Texas A\&M University, College Station, TX 77843-3368, USA}\email{liuwencai1226@gmail.com; wencail@tamu.edu}
\author{Lufang Mi}
\address[Lufang Mi]{College of Science, Shandong University of Aeronautics, Binzhou, 256600, China }\email{milufang@126.com}
\begin{document}

 \begin{abstract}We develop a sharp palindromic
   argument for general 1D operators, that proves absence
   of semi-uniform localization in the regime of exponential symmetry-based
   resonances. This
  provides the first examples of operators with dynamical
   localization but no SULE/SUDL, as well as with nearly uniform
   distribution of centers of localization in absence of SULE. For the almost Mathieu operators, this also leads to a sharp
   arithmetic criterion for semi-uniformity of dynamical localization in the Diophantine case. 
\end{abstract}

 \maketitle

\section{Introduction and main results}
A corollary of the celebrated RAGE theorem, e.g. \cite{cycon}, states
that dynamical localization, (i.e. boundedness in $t$ of
$\|xe^{-itH}\delta_0\|$) implies pure point spectrum of a self-adjoint
operator $H:\ell^2(X^d)\to \ell^2(X^d),\; X\in\{\Z,\R\}$. The fact that
the converse is not true has been known since \cite{djls,djlsprl}. Indeed,
even Anderson localization, that is a complete set of normalized eigenfunctions
$\phi_s$ satisfying
\begin{equation}\label{and}
|\phi_s(n)|  \leq C_se^{-\gamma |n-n_s|}, 
\end{equation}
for some $C_s<\infty, n_s\in X^d,$ can coexist with nontrivial (even almost ballistic) dynamics
\cite{djls}. Other examples of this sort, including in a physically
relevant context, were studied in \cite{jss,js,gk}. The reason why
eigenfunction estimates as in (\ref{and}) may not translate into
dynamical bounds  is the lack of control over the constants $C_s$
required in (\ref{and}). Indeed, a uniform bound on $C_s,$ so called
{\it uniform} localization, would immediately do the job, but in  most
interesting situations, save for a couple of notable exceptions \cite{mar,dg,jk, jk2,chu,kerz,csz,kps}, it
does not hold \cite{djls,j}. However, ~\cite{djls} was not all bad
news, because it introduced the sufficient level of control: for any $\varepsilon>0,$ 
\begin{equation}\label{su1}
C_s\leq C_\varepsilon e^{\varepsilon |n_s|}, 
\end{equation}

dubbed semi-uniform localization (SULE). Indeed, it was shown in
\cite{djls} that  for operators with simple spectrum SULE is equivalent to semi-uniform dynamical 
localization (SUDL):
\begin{equation}\label{su2}
\sup_t| (\delta_n, e^{-itH}\delta_m)|\leq  C_\varepsilon e^{\varepsilon |m|}e^{-\gamma |n-m|},
\end{equation}
The standard definitions for other types of dynamical localization are
\begin{definition}[DL]
	We say $H$ on $\ell^2(\Z^d)$ has dynamical localization if for any exponentially decaying  $\phi$ and $p$, 
$\sum_{n\in\Z^d} |n|^p|(e^{-itH} \phi,\delta_n)|^2$ is bounded.
\end{definition}

\begin{definition}[EDL]
	We say $H$ on $\ell^2(\Z^d)$ has exponential dynamical localization if for any exponentially decaying  $\phi$ and $p$,  there exist $C$ and $\gamma$ such that 
$$\sup_{t} |(e^{-itH} \phi,\delta_n)|\leq Ce^{-\gamma |n|}.$$
\end{definition}
\begin{remark}\cite{djls,tchecmp01}
As mentioned, in one dimension (or, more generally, for operators with simple spectrum), SULE is equivalent to SUDL. 
    Clearly, SULE implies EDL and EDL implies DL.
\end{remark}
Since then, upgrading mere localization to  SULE has become an important pathway to proving
dynamical localization, e.g. \cite{gd,jl3,gj,g,ds01,gk1,gkjems13,gyz23}. It has remained
unclear though whether dynamical localization can hold without SULE,
especially in a non-artificial model. Here we resolve this by
\begin{theorem}\label{exist}
There exist explicit operators with DL/EDL but no SULE/SUDL.
\end{theorem}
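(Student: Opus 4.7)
The strategy is to exhibit explicit 1D operators satisfying Anderson localization with a uniform exponential decay rate $\gamma$ whose localization prefactors $C_s$ in (\ref{and}) are forced to grow as $e^{\alpha|n_s|}$ with some strictly positive $\alpha<\gamma/2$. Such a controlled exponential blow-up already destroys SULE, which requires $C_s\leq C_\varepsilon e^{\varepsilon|n_s|}$ for \emph{every} $\varepsilon>0$, while still permitting DL/EDL because the bilinear product $|\phi_s(0)\phi_s(n)|$ remains summable in $s$ and exponentially small in $|n|$. A natural candidate is the almost Mathieu operator with a Diophantine frequency and a phase $\theta$ chosen so that the identity $V(2k_j-n)=V(n)$ holds on long intervals around a sparse sequence $k_j\to\infty$ (for the AMO this forces $\theta$ to be close to $-2\pi\alpha k_j$ modulo $\pi$, which can be arranged by standard continued-fraction constructions); an elementary alternative is a bounded random potential with deterministic palindromic blocks planted at sparse positions $k_j$.

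The core step is the palindromic argument itself. If $V$ is reflection-symmetric about $k$ on a window $[k-L,k+L]$ with $L$ large, then $\phi_s(n)$ and $\phi_s(2k-n)$ satisfy the same discrete Schr\"odinger equation at energy $E_s$ on that window. Using an a priori Anderson-decay estimate at the two window endpoints and simplicity of the eigenvalue, one concludes that $\phi_s(2k-n)=\pm\phi_s(n)$ up to an error that is negligible compared with $|\phi_s(n_s)|$. Consequently, if $n_s$ is near the origin and $k_j$ is far, then $|\phi_s(2k_j-n_s)|$ is comparable to $|\phi_s(n_s)|$, and inserting this into (\ref{and}) at $n=2k_j-n_s$ forces $C_s\gtrsim e^{\gamma|k_j-n_s|}$. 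Arranging the sparsity of the $k_j$'s so that infinitely many eigenfunctions have $|n_{s_j}|$ bounded while the nearest palindromic center is at distance $k_j\to\infty$ contradicts (\ref{su1}), and by the 1D equivalence recorded in the remark above, also contradicts (\ref{su2}).

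To recover DL/EDL one expands $e^{-itH}\phi=\sum_s\langle\phi,\phi_s\rangle e^{-itE_s}\phi_s$ for an exponentially decaying $\phi$ localized near the origin and bounds
\[
\sup_t|(e^{-itH}\phi,\delta_n)|\leq\sum_s|\langle\phi,\phi_s\rangle|\,|\phi_s(n)|\lesssim\sum_s C_s^2\,e^{-\gamma|n_s|}e^{-\gamma|n-n_s|}.
\]
With the palindromic estimate $C_s\leq e^{\alpha|n_s|}$ and $2\alpha<\gamma$, the sum over $n_s$ converges to $\lesssim e^{-\gamma'|n|}$ for some $\gamma'>0$ uniformly in $t$, which is EDL and hence DL.

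The main technical obstacle is the sharpness of the parameter balance: the palindromic windows around the $k_j$ must be long enough (and the sequence $k_j$ frequent enough) for the reflection argument of step two to produce infinitely many SULE-violating eigenfunctions, yet sparse enough that the effective blow-up rate $\alpha$ stays strictly below $\gamma/2$ and the dynamical expansion remains summable. In the AMO setting this balance is dictated by the arithmetic of $\alpha$ and $\theta$, and is precisely what one expects to encode the sharp arithmetic criterion promised in the abstract; in the planted model the balance is engineered directly through the spacing and length of the palindromic blocks.
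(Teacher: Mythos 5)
Your proposal is in the same general spirit as the paper --- both use almost Mathieu (or palindromically engineered) potentials, the reflection/palindromic symmetry, and a balance between symmetry strength and Lyapunov rate --- but it has a genuine internal inconsistency and misses two key ideas the paper relies on.

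\textbf{Internal inconsistency.} You want both (i) $C_s\leq e^{\alpha|n_s|}$ with a fixed $\alpha<\gamma/2$ for all $s$, to run the EDL sum, and (ii) infinitely many $s_j$ with $|n_{s_j}|$ bounded and $C_{s_j}\gtrsim e^{\gamma|k_j-n_{s_j}|}\to\infty$, to contradict (\ref{su1}). These cannot coexist: for those $s_j$, (i) gives $C_{s_j}=O(1)$ while (ii) forces $C_{s_j}\to\infty$. Worse, if an eigenfunction $\phi_{s_j}$ with center near the origin really satisfied $|\phi_{s_j}(2k_j-n_{s_j})|\sim|\phi_{s_j}(n_{s_j})|\sim 1$ with $k_j\to\infty$, then the time-averaged quantity $\sum_s|\phi_s(0)|^2|\phi_s(n)|^2$ would fail to decay at $n=2k_j-n_{s_j}$, so EDL would actually be \emph{false} for the very operator you construct. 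The paper escapes this because the eigenfunctions implicated in its SULE contradiction are centered near $k_i/2$ (not near $0$), and their ``palindromic rise'' occurs at $k_i-m_s$, only $O(\epsilon' k_i)$ away from $m_s$ with both points at distance $\sim k_i/2$ from the origin; this is fully compatible with the bilinear bound $|\phi_s(n)\phi_s(\ell)|\leq Ce^{-C_1|n-\ell|+C_2|n|}$ of Theorem~\ref{edl} and hence with EDL via \cite{tchecmp01}. The correct bound to aim at for EDL is this bilinear one, not a unilateral $C_s e^{-\gamma|n-n_s|}$ bound with any fixed growth of $C_s$.

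\textbf{Two missing ingredients.} First, you assert there exist infinitely many eigenfunctions with $|n_{s_j}|$ bounded while the relevant palindromic center $k_j$ is far, but give no mechanism to locate centers of localization. The paper's proof of part~(2) of Theorem~\ref{crit} is by contradiction: assuming SULE, Theorem~\ref{thmsule} (uniform density of centers, \cite[Thm.~7.1]{djls}) together with Lemma~\ref{lemax} supplies, for each large $k_i$, a normalized eigenfunction whose global maximum lies in the carefully calibrated window $[k_i/2-3\epsilon' k_i,\,k_i/2-2\epsilon' k_i]$. Without invoking this density-of-centers step there is no way to produce eigenfunctions where you need them. Second, the reflection identity is established at the reflection center $m_i=k_i/2$ (Theorem~\ref{thmphase}), with error $Ce^{-\epsilon k_i/2}$, and must be propagated by transfer matrices to the point $m$ under scrutiny. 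This propagation multiplies the error by $e^{B|m-m_i|}$, so the relation is only meaningful for $|m-m_i|\lesssim \epsilon k_i/B$ --- precisely why the paper sets $\epsilon'=\epsilon/(20B)$ and places $m$ within $O(\epsilon' k_i)$ of $k_i/2$. Your proposed use of eigenfunctions with $n_s$ near the origin places $m$ at distance $\sim k_i/2$ from $m_i$, where the propagated error $e^{Bk_i/2-\epsilon k_i/2}$ overwhelms everything (unless $B<\epsilon$, which fails in general). So the palindromic step as you describe it simply does not reach eigenfunctions centered far from $k_i/2$.
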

{\bf Remark.} It should be noted that SULE/SUDL implies semi-stability
of dynamical localization with respect to fast decaying perturbations
\cite{djls}: power-logarithmic in $t$ bound on the growth of
$\|xe^{-itH}\delta_0\|$ is stable. In contrast, even exponential dynamical localization in the form 
\begin{equation}\label{su}
\sup_t| (\delta_n, e^{-itH}\delta_m)|\leq C_m e^{-\gamma |n-m|},
\end{equation}
without control on the growth of $C_m$ does not imply similar dynamical stability for anything but the
compactly supported perturbations. It is an interesting question
whether there is indeed (and what level of) dynamical instability in
our examples. A similar issue is the dimensional characteristics of
spectral measures of localized perturbations.
%time of the
%moments of the position operator) 

Our result does not come from an artificial construction but from a
new delocalization criterion that holds in high generality. It does
not even require ergodicity and  includes  the entire class of almost
periodic potentials.
%Let $f:\Z\to\R$ be an almost periodic function with hull $U_f$ and Haar
%measure $\mu$. 
Consider discrete Schr\"odinger operator
$$(H_{g}u)(n)=u(n+1)+u(n-1)+ g(n)u(n)$$ on  $   \ell^2(\mathbb{Z}),$
%where  $g\in U_f,$ is called almost periodic. 
where $g\in  \ell^{\infty}(\Z).$
Let $R$ be the reflection map $R:\ell^{\infty}(\Z)\to
\ell^{\infty}(\Z)$, $Rg(n)=g(-n),$ % for $g\in U_f$,
and $T$ be the shift, $Tg(n)=g(n+1).$ %for $g\in U$.
Define $d(g_1,g_2)=||g_1-g_2||_{\ell^{\infty}(\Z)}$. 

Define one-step transfer matrices
$$
A_n(E):=\begin{pmatrix}
E-g(n)&-1\\
1&0
\end{pmatrix}
$$
and the multi-step transfer matrices by
$$
A_{n,k}(E):=\begin{cases}
A_{n-1}(E)A_{n-2}(E)\cdots A_k(E),&k< n;\\
A_{n,k}=I,&k=n;\\
A_{k,n}^{-1}(E),&k>n.
\end{cases}
$$

Define \begin{equation}\label{Lambda}\mathcal{L}(E) :=
\limsup_{n\to\infty}\sup_{ k\in\Z} \frac{\ln \| A_{n+k,k}(E)\|}{n}.\end{equation} 
 Clearly, for
 bounded $V,$ $\mathcal{L}(E)<\infty$, for every $E,$ and for potentials $g(n)=V(T^nx)$ where $T$ is uniquely ergodic and $V$ continuous,
 $\mathcal{L}(E)$ coincides with the ($x$-independent) Lyapunov exponent.

Let $\delta(g)\in[0,\infty)$ be defined by $$\delta(g):=\limsup_{|n|\to\infty}
-\frac{\ln d(RT^{n}g,g)}{ |n|}.$$ % Let $L(E)$ be the Lyapunov exponent of $H_g.$
Then we have
\begin{theorem}\label{crit}
Operator $H_g$ has
\begin{enumerate} 
 \item no point spectrum/dynamical localization on $\{E:\delta(g)>\mathcal{L}(E)\}$;
\item  no SULE/SUDL  if $\delta(g)>0.$
\end{enumerate} 
\end{theorem}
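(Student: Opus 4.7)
My plan is to argue by contradiction. For Part (2), I assume $H_g$ has SULE (equivalently SUDL, since in one dimension pure point spectrum is simple) and exploit the approximate reflection symmetry from $\delta(g)>0$ to produce a pointwise SULE violation. Part (1) will follow by the same reflection setup together with the fact that, when $\delta(g)>\mathcal{L}(E)$, the palindromic error dominates the Lyapunov rate of the (unique) $\ell^2$ solution at $E$, so an $\ell^2$ eigenfunction is forced (up to scalar) to coincide with its own reflection about $n_k/2$, contradicting its $\ell^2$ character.

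The setup is as follows. Pick $\alpha\in(0,\delta(g))$, so along an infinite sequence $n_k\to\infty$ we have $d(RT^{n_k}g,g)\le e^{-\alpha n_k}$. Define the self-adjoint unitary reflection $R_k:\ell^2(\Z)\to\ell^2(\Z)$, $(R_ku)(n):=u(n_k-n)$. A direct computation gives $R_kHR_k=H_{RT^{n_k}g}$, hence $\|[H,R_k]\|\le e^{-\alpha n_k}$. For any eigenfunction $\phi_s$ of $H$ at $E_s$, the unit vector $\psi_{s,k}:=R_k\phi_s$ is spatially centered at $n_k-n_s$ and satisfies $\|(H-E_s)\psi_{s,k}\|\le e^{-\alpha n_k}$. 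Using SULE on $\phi_s$, the self-overlap obeys
\[
  |\langle\psi_{s,k},\phi_s\rangle|\le C_\varepsilon^{2}(n_k+1)\,e^{2\varepsilon|n_s|}\,e^{-\gamma|n_k-2n_s|},
\]
exponentially small in $n_k$ whenever $|n_s|$ is bounded.

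Combined with the spectral-theorem bound $\|(1-E_H(I_k))\psi_{s,k}\|\le e^{-\alpha n_k/2}$ for $I_k:=(E_s-e^{-\alpha n_k/2},E_s+e^{-\alpha n_k/2})$, the near-orthogonality of $\psi_{s,k}$ to $\phi_s$ forces the existence of an eigenvector $\phi_{r(s,k)}\ne\phi_s$ with $|E_{r(s,k)}-E_s|<e^{-\alpha n_k/2}$ onto which $\psi_{s,k}$ projects with nearly full norm. Using SULE on the eigenvectors lying in $I_k$ together with the spatial concentration of $\psi_{s,k}$ at $n_k-n_s$, I would localize $n_{r(s,k)}$ to within $O(\varepsilon n_k/\gamma)$ of $n_k-n_s$ and show that $\phi_{r(s,k)}\approx R_k\phi_s$ in $\ell^2$-norm up to a phase, modulo errors decaying faster than any negative power of $n_k$.

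Finally, since the spectrum is simple, $\phi_s\perp\phi_{r(s,k)}$, and the symmetric/antisymmetric combinations $\Phi_\pm:=(\phi_s\pm\phi_{r(s,k)})/\sqrt{2}$ are approximate $(\pm1)$-eigenvectors of $R_k$. Decomposing $\phi_s=(\Phi_++\Phi_-)/\sqrt{2}$ and $\phi_{r(s,k)}=(\Phi_+-\Phi_-)/\sqrt{2}$ and evaluating at sites near $n_s$ and near $n_k-n_s$, one extracts a pointwise lower bound on $\max(|\phi_s(n_k-n_s)|,|\phi_{r(s,k)}(n_s)|)$ comparable to $|\phi_s(n_s)|$, which directly violates SULE for sufficiently small $\varepsilon>0$ and large $n_k$. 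The principal obstacle is precisely this bump extraction: when $I_k$ hosts more than two eigenvalues close to $E_s$, the near-$R_k$-invariant subspace may have higher dimension, and one must combine SULE's spatial locality bounds with the approximate involution $R_k^{2}=\mathrm{Id}$ to isolate a dominant reflection pair and discard spurious multi-degenerate contributions.
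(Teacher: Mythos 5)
Your resonant-pair strategy does not actually produce a SULE violation, and the gap is at the final ``bump extraction'' you yourself flag as the principal obstacle. Trace through your own setup: you arrive at a pair $\phi_s,\phi_{r(s,k)}$ with $\phi_{r(s,k)}\approx R_k\phi_s$, so $\phi_{r(s,k)}(n_s)\approx\phi_s(n_k-n_s)$ and $\phi_{r(s,k)}(n_k-n_s)\approx\phi_s(n_s)$. Since the center of $\phi_{r(s,k)}$ is near $n_k-n_s$, both approximate equalities are \emph{consistent} with SULE, not in tension with it: the small quantity $\phi_s(n_k-n_s)$ pairs with the small quantity $\phi_{r(s,k)}(n_s)$, and the large quantity $\phi_s(n_s)$ pairs with the large quantity $\phi_{r(s,k)}(n_k-n_s)$. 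Your claimed lower bound $\max\bigl(|\phi_s(n_k-n_s)|,|\phi_{r(s,k)}(n_s)|\bigr)\gtrsim|\phi_s(n_s)|$ is therefore false under the very scenario you constructed; a pair of eigenfunctions that are approximate reflections of one another with localization centers on opposite sides of $n_k/2$ is a perfectly SULE-compatible configuration, and no amount of untangling multi-degenerate spectral clusters will change that.

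What the paper does, and what your argument is missing, is the \emph{density of localization centers} consequence of SULE (Theorem \ref{thmsule}, i.e.\ \cite[Theorem 7.1]{djls}): SULE together with orthonormality and completeness forces the centers $\{n_s\}$ to have uniform density one. This is what lets one \emph{choose} a single eigenfunction $u$ whose global maximum $m$ lies in $[k_i/2-3\epsilon'k_i,\,k_i/2-2\epsilon'k_i]$, i.e.\ close to but strictly off the reflection center $k_i/2$. The palindromic identity (Theorem \ref{thmphase}) propagated by transfer matrices then gives $|u(m)+\iota\,u(k_i-m)|\le Ce^{-\frac{2}{5}\epsilon k_i}$; but $|u(m)|\ge e^{-\varepsilon k_i}$ by normalization and SULE, while $|u(k_i-m)|\le e^{-t\epsilon' k_i}$ by the SULE decay away from $m$, and choosing $\varepsilon$ small relative to $\epsilon'$ yields the contradiction. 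The contradiction is intrinsic to a \emph{single} well-placed eigenfunction, not to a resonant pair. Without invoking the density theorem you have no control over where any $n_s$ sits, so you cannot place one near $k_i/2$, and the pair mechanism offers no substitute. (Your sketch of part (1) is likewise too loose: a function equal to its own reflection can certainly be $\ell^2$; the actual Gordon-type contradiction there compares the palindromic identity against the Lyapunov decay rate, as in \cite{jljems}.)
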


Part (1) of Theorem \ref{crit}  is the sharp abstract version of the
palindromic argument of \cite{js94}.   It was proved in \cite{jljems}
for the particular setting of one-frequency quasi-periodic Schr\"odinger operators, however the
proof of our abstract statement is exactly the same, so we omit it
here, and instead concentrate on the proof of part (2). 

Theorem
\ref{crit} leads to an immediate corollary for quasiperiodic operators.
Assume $v$  is a function from $\T\to \R$.  
Operators
$$(H_{v,\alpha,\theta}u)(n):=u(n+1)+u(n-1)+ v(2\pi(\theta+n\alpha))u(n),$$
where  $\alpha\in \R\backslash \Q$ and $\theta\in \R$, are called
quasiperiodic. Let $||x||=\text{ dist }(x,\Z)$. Define an
arithmetic parameter $\delta(\alpha,\theta)\in[0,\infty)$ by
\begin{equation}\label{G.delta}
\delta(\alpha,\theta):=\limsup_{k\to\infty} -\frac{\ln ||2\theta+ n\alpha||_{\R/\Z}}{|n|}
\end{equation}

Part (2) of Theorem \ref{crit} implies

\begin{corollary}\label{absence1}
	Suppose $v:\mathbb{R}/\mathbb{Z}\to \mathbb{R}$ is  even and  H\"older continuous,  and $\delta(\alpha,\theta)>0$. Then
	$H_{v,\alpha,\theta}$ does not have SULE/SUDL.
\end{corollary}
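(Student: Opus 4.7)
The plan is to reduce Corollary \ref{absence1} directly to part (2) of Theorem \ref{crit} by showing that, for the quasiperiodic potential $g(n) = v(2\pi(\theta + n\alpha))$, the resonance quantity $\delta(g)$ dominates $\delta(\alpha,\theta)$ up to the Hölder exponent. Once $\delta(g) > 0$ is established, the conclusion is immediate from the abstract criterion.

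First I would unwind the definitions. Applying the reflection $R$ and shift $T^n$ to $g$ gives
\[
(RT^n g)(k) = g(n-k) = v(2\pi(\theta + (n-k)\alpha)).
\]
The evenness assumption $v(x) = v(-x)$ on $\mathbb{T}$ lets me rewrite this as
\[
(RT^n g)(k) = v(2\pi(-\theta + (k-n)\alpha)).
\]
Now the two functions $g(k) = v(2\pi(\theta + k\alpha))$ and $(RT^n g)(k)$ are evaluations of $v$ at arguments whose difference, modulo $2\pi$, equals $2\pi(2\theta + n\alpha)$, and this difference is independent of $k$. This is the key cancellation that turns palindromic symmetry into a uniform-in-$k$ estimate.

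Second, I would invoke Hölder continuity. If $v$ is $\beta$-Hölder with constant $C$, then for every $k$,
\[
\bigl| g(k) - (RT^n g)(k) \bigr| \leq C \,\|2\theta + n\alpha\|^{\beta},
\]
so $d(RT^n g, g) \leq C \|2\theta + n\alpha\|^{\beta}$. Taking $-\ln$, dividing by $|n|$, and passing to the $\limsup$ as $|n|\to\infty$ yields
\[
\delta(g) \;\geq\; \beta \cdot \delta(\alpha,\theta).
\]
In particular, $\delta(\alpha,\theta) > 0$ forces $\delta(g) > 0$.

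Finally, part (2) of Theorem \ref{crit} applied to $H_g = H_{v,\alpha,\theta}$ rules out SULE/SUDL. There is no serious obstacle here; the only slightly delicate point is verifying that evenness of $v$ makes the argument of $(RT^n g)(k)$ differ from the argument of $g(k)$ by a quantity independent of $k$, which is what allows the pointwise Hölder bound to translate into an $\ell^\infty$ bound on $d(RT^n g, g)$. Without evenness, the two arguments would differ by something depending on $k$, and one would lose uniform control, so it is important to highlight that this symmetry is precisely what the palindromic mechanism exploits.
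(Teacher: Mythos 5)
Your proposal is correct and is exactly the argument the paper leaves implicit when it states that Part (2) of Theorem \ref{crit} implies Corollary \ref{absence1}: evenness of $v$ makes the arguments of $g(k)$ and $(RT^ng)(k)$ differ by the $k$-independent quantity $2\pi(2\theta+n\alpha)$, and $\beta$-H\"older continuity then gives $d(RT^ng,g)\le C\|2\theta+n\alpha\|^\beta$, hence $\delta(g)\ge\beta\,\delta(\alpha,\theta)>0$. The reduction to the abstract criterion is exactly as you say.
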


Theorem \ref{crit} is sharp since

\begin{theorem}\label{sharp}

\begin{enumerate} 
\item For any $0<b< \mathcal{L}<\infty$, there exists 
%an almost periodic
%  function $f$ and $g\in U_f$
$g\in \ell^{\infty}(\Z)$
  with $\delta(g)=b$ and $\mathcal{L}(E)\ge \mathcal{L}$ for all $E$, such that $H_g$ has point spectrum (even Anderson
  localization) but no SULE/SUDL.
\item There exists 
%$f$ and $g\in U_f$
$g\in \ell^{\infty}(\Z)$
  with $\delta(g)=0$ such that $H_g$ has SULE/SUDL.
\end{enumerate} 
\end{theorem}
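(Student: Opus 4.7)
The plan is to exhibit both examples via the almost Mathieu operator $H_{v,\alpha,\theta}$ with $v(x)=\lambda\cos(2\pi x)$, using two structural facts. First, a direct product-to-sum computation gives
\[
d(RT^{n}g,g) \;=\; 2\lambda \,\bigl|\sin\bigl(\pi(2\theta+n\alpha)\bigr)\bigr|,
\]
since the companion factor $\sin(\pi(n-2k)\alpha)$ saturates to $1$ as $k$ varies (using that $2\alpha$ is irrational). In particular $\delta(g)=\delta(\alpha,\theta)$. Second, Herman's subharmonic bound supplies $\mathcal{L}(E)\ge L(E)\ge \ln(\lambda/2)$ for every $E$ whenever $\lambda\ge 2$. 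These two facts reduce both parts of Theorem~\ref{sharp} to a sharp arithmetic choice of the phase $\theta$.

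For part~(1), I would fix a Diophantine $\alpha$ (so $\beta(\alpha)=0$) and $\lambda>2$ with $\ln(\lambda/2)\ge \mathcal{L}$, and then construct $\theta$ with $\delta(\alpha,\theta)=b$ \emph{exactly} by a Liouvillian iteration: along a sufficiently sparse subsequence $(q_{n_k})$ of continued-fraction denominators of $\alpha$, prescribe $\|2\theta+q_{n_k}\alpha\|\asymp e^{-b q_{n_k}}$, while inductively guaranteeing $\|2\theta+n\alpha\|\ge e^{-(b+\eta_n)|n|}$ with $\eta_n\to 0$ at every other scale. This yields $\delta(g)=b$ and $\mathcal{L}(E)\ge \mathcal{L}$. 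Since $\beta(\alpha)+\delta(\alpha,\theta)=b<\ln(\lambda/2)$, the sharp arithmetic Anderson localization theorem for AMO delivers Anderson localization, while $\delta(g)=b>0$ invokes Corollary~\ref{absence1} (equivalently, Part~(2) of Theorem~\ref{crit}) to exclude SULE/SUDL.

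For part~(2), I would keep the same model but take $\theta$ Diophantine with respect to $\alpha$, for instance $\theta=0$, so that $\|2\theta+n\alpha\|=\|n\alpha\|\ge c|n|^{-\tau}$ and hence $\delta(g)=0$. Jitomirskaya's classical localization theorem for AMO with $\lambda>2$ and Diophantine $(\alpha,\theta)$ supplies SULE/SUDL; alternatively, almost every realization of a random Anderson potential satisfies $\delta(g)=0$ almost surely and enjoys SULE by standard multiscale analysis. The main obstacle lies in part~(1): the strict equality $\delta(\alpha,\theta)=b$ demands delicate inductive control of $\|2\theta+n\alpha\|$ across \emph{all} intermediate scales, not merely the good subsequence. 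The other essential external input is the sharp-threshold arithmetic AL theorem for AMO, but this is appropriate, since Theorem~\ref{sharp} is precisely the assertion that the $\delta$-threshold in Theorem~\ref{crit} is tight.
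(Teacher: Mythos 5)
Your proposal is correct and takes essentially the same route as the paper: Remark~\ref{recon} reduces Theorem~\ref{sharp} to the almost Mathieu trichotomy of Theorem~\ref{amcrit}, and you supply exactly the reduction it leaves implicit — the product-to-sum identity giving $\delta(g)=\delta(\alpha,\theta)$, Herman's lower bound on the Lyapunov exponent to guarantee $\mathcal{L}(E)\geq \mathcal{L}$, and a diagonal construction of a phase with $\delta(\alpha,\theta)=b$ exactly. The only cosmetic difference is normalization: the paper's potential $2\lambda\cos$ gives $L(E)\geq\ln\lambda$ with threshold $\lambda>1$, whereas your $\lambda\cos$ gives $\ln(\lambda/2)$ with threshold $\lambda>2$, which is equivalent after rescaling.
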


The examples for Theorem \ref{sharp} will be\footnote{Of course!} from
the almost Mathieu family, operator  
 \begin{equation}\label{AMO}
 (H_{\lambda,\alpha,\theta}u)(n)=u({n+1})+u({n-1})+ 2\lambda v(\theta+n\alpha)u(n),  \text{ with }  v(\theta)=\cos2\pi \theta,
 \end{equation}
where $\lambda>1$ is the coupling, $\alpha $ is the frequency, and
$\theta $ is the phase. %It is, of course, an almost periodic operator
%with $U= \R/\Z$ and $\mu$ the Lebesgue measure. 

%Combined with some past results, 
Corollary \ref{absence1} leads to a
sharp trichotomy for the almost Mathieu family with Diophantine
frequency (see (\ref{Diokt}) for the definition) depending on the properties of the phase. 

We have the following trichotomy

\begin{theorem}\label{amcrit}
For Diophantine $\alpha,$ the operator $H_{\lambda,\alpha,\theta}$ with $\lambda>1$ has
\begin{enumerate} 

\item no point spectrum if $\delta(\alpha,\theta)>\ln \lambda$;
\item EDL/DL but no
  SULE/SUDL  if $0<\delta(\alpha,\theta)<\ln \lambda$;
\item   SULE/SUDL/EDL if $\delta(\alpha,\theta)=0. $
\end{enumerate} 
\end{theorem}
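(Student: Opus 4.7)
The plan is to reduce each clause to Theorem~\ref{crit} or Corollary~\ref{absence1}, combined with the sharp arithmetic Anderson-localization theorem for the Diophantine almost Mathieu operator. For $g(n)=2\lambda\cos 2\pi(\theta+n\alpha)$ the identity $\cos A-\cos B=-2\sin\tfrac{A+B}{2}\sin\tfrac{A-B}{2}$ yields
\[
(RT^ng)(k)-g(k)=-4\lambda\sin\pi(2\theta+n\alpha)\sin\pi(n-2k)\alpha,
\]
so taking $\sup_{k\in\Z}$ and using density of $\{2k\alpha\}\bmod 1$ gives $d(RT^ng,g)=4\lambda|\sin\pi(2\theta+n\alpha)|\asymp\|2\theta+n\alpha\|$ when this is small; hence $\delta(g)=\delta(\alpha,\theta)$. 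The function $v=\cos 2\pi\theta$ is even and smooth, so Corollary~\ref{absence1} applies. Herman's bound together with the Bourgain--Jitomirskaya identity gives $\mathcal{L}(E)=\ln\lambda$ on $\sigma(H_{\lambda,\alpha,\theta})$ for $\lambda>1$ (the uniform exponent in \eqref{Lambda} matching the usual one by unique ergodicity of irrational rotation).

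For (1), eigenvalues lie in the spectrum, where $\delta(g)=\delta(\alpha,\theta)>\ln\lambda=\mathcal{L}(E)$; Theorem~\ref{crit}(1) rules out point spectrum. For the negative half of (2), $\delta(\alpha,\theta)>0$ and Corollary~\ref{absence1} exclude SULE/SUDL.

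The positive half of (2) and all of (3) rest on the sharp arithmetic Anderson-localization theorem of Jitomirskaya--Liu \cite{jljems}: for Diophantine $\alpha$, $\lambda>1$ and $\delta(\alpha,\theta)<\ln\lambda$, the operator $H_{\lambda,\alpha,\theta}$ admits a complete orthonormal eigenbasis $\{\phi_s\}$ whose eigenfunctions satisfy $|\phi_s(n)|\le C_\varepsilon e^{(\delta(\alpha,\theta)+\varepsilon)|n_s|}e^{-(\ln\lambda-\delta(\alpha,\theta)-\varepsilon)|n-n_s|}$ for every $\varepsilon>0$. When $\delta(\alpha,\theta)=0$ this is precisely SULE, which in one dimension is equivalent to SUDL and implies EDL by the remark following the definition of EDL. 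When $0<\delta(\alpha,\theta)<\ln\lambda$, inserting the same bound into $(e^{-itH}\phi,\delta_n)=\sum_s e^{-itE_s}\langle\phi,\phi_s\rangle\phi_s(n)$, exploiting the decay of $\langle\phi,\phi_s\rangle$ in $|n_s|$ coming from the support of $\phi$, and splitting the sum at $|n_s|=|n|/2$ produces $\sup_t|(e^{-itH}\phi,\delta_n)|\le Ce^{-\gamma'|n|}$ for any exponentially decaying $\phi$, i.e.\ EDL; DL follows a fortiori.

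The principal obstacle is the positive half of~(2): EDL must be extracted \emph{without} SULE. The key is the asymmetry $\ln\lambda-\delta>0$ between the eigenfunction decay rate and the semi-uniform growth rate of the constants $C_s$, combined with decay of the overlap $\langle\phi,\phi_s\rangle$ in $|n_s|$. Verifying that the quantitative bounds of \cite{jljems} are sharp enough to close this competition cleanly is the technical heart of the argument.
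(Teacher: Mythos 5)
Your reduction of part (1) and the negative half of part (2) to Theorem~\ref{crit}(1) and Corollary~\ref{absence1}, together with the computation $\delta(g)=\delta(\alpha,\theta)$ via the product-to-sum formula, matches the paper and is fine. The gap is in the positive halves of (2) and (3). You posit that \cite{jljems} yields the quantitative bound $|\phi_s(n)|\le C_\varepsilon e^{(\delta+\varepsilon)|n_s|}e^{-(\ln\lambda-\delta-\varepsilon)|n-n_s|}$ and then sum over $s$; this is not a bound stated in \cite{jljems}, and even if it held, it would not close the argument. Taking $\phi=\delta_0$ (the hardest relevant case), the resulting bound $|\phi_s(0)\phi_s(n)|\le C^2e^{2(\delta+\varepsilon)|n_s|}e^{-(\ln\lambda-\delta-\varepsilon)(|n_s|+|n-n_s|)}$ decays in $|n_s|$ only at rate $\ln\lambda-3\delta-3\varepsilon$, so the split at $|n_s|=|n|/2$ produces exponential decay in $|n|$ only when $\delta<\ln\lambda/3$; for $\delta$ closer to $\ln\lambda$ the sum over large $|n_s|$ is out of control. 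The failure is not an artifact of loose constants: the penalty $e^{c|n_s|}$ must really grow (otherwise you would have SULE, which part (2) rules out), yet any fixed growth rate in $|n_s|$ kills the naive summation.

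The paper avoids this by proving a structurally different estimate, Theorem~\ref{edl}: $|\phi_s(n)\phi_s(\ell)|\leq Ce^{-C_1|n-\ell|+C_2|n|}$ for fixed $C_1,C_2>0$ (with $C_2>C_1$), uniformly in $s$. Crucially the penalty is $e^{C_2|n|}$, not $e^{C|n_s|}$; this relies on the observation that when $|n|,|\ell|\ll|n_s|$ the eigenfunction satisfies $|\phi_s(n)|\lesssim e^{-c|n|}$ directly (no large constant at all), and when $|n|\gtrsim|n_s|$ the constant is absorbed into $e^{C_2|n|}$. Extracting this requires the case analysis in Section~\ref{secamcrit} based on the resonance-aware decay estimate of Theorem~\ref{Keytheorem}, rather than any one-size-fits-all $C_se^{-\gamma|n-n_s|}$ bound, and the resulting product estimate is exactly the hypothesis of Theorem~6.5 of \cite{tchecmp01}, which outputs EDL. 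The SULE claim in (3) likewise needs its own argument (the paper runs the Theorem~\ref{Keytheorem} estimate with an arbitrarily small parameter $b$ to get $C_s\le e^{b|n_s|}$ for every $b>0$); simply setting $\delta=0$ in your asserted bound does not establish the \emph{for all $\varepsilon>0$} quantifier required by SULE.
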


For comparison, we list

\begin{theorem}\cite[Theorem 1.1]{jljems}\label{thmal}
For Diophantine $\alpha,$ operator $H_{\lambda,\alpha,\theta}$ with $\lambda>1$ has
\begin{enumerate} 

\item no point spectrum if $\delta(\alpha,\theta)>\ln \lambda$ ;
\item  Anderson localization if $\delta(\alpha,\theta)<\ln \lambda.$\footnote{A new proof was also recently given in \cite{gyz24}}
\end{enumerate} 

\end{theorem}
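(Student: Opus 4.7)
The plan is to prove the two parts by quite different routes. Part~(1), the absence of point spectrum when $\delta(\alpha,\theta)>\ln\lambda$, I would derive as an immediate corollary of Theorem~\ref{crit}(1); part~(2), Anderson localization in the complementary regime, requires genuinely new work and is the heart of the statement.

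For part~(1), the key reduction is to match the abstract parameter $\delta(g)$ with the arithmetic parameter $\delta(\alpha,\theta)$ for $g(n)=2\lambda\cos 2\pi(\theta+n\alpha)$. A direct sum-to-product computation gives
\[
g(n-k)-g(k) = -4\lambda \sin\bigl(2\pi(\theta+n\alpha/2)\bigr)\sin\bigl(\pi(n-2k)\alpha\bigr),
\]
and since $\alpha$ is irrational the sup over $k\in\Z$ of $|\sin(\pi(n-2k)\alpha)|$ equals $1$. Hence $d(RT^n g,g)=4\lambda|\sin 2\pi(\theta+n\alpha/2)|$, which is comparable to $\|2\theta+n\alpha\|_{\R/\Z}$. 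Taking $-\log/|n|$ and $\limsup$ yields $\delta(g)=\delta(\alpha,\theta)$. Combined with $\mathcal{L}(E)=\ln\lambda$ on the spectrum of $H_{\lambda,\alpha,\theta}$ (Herman's lower bound together with the Bourgain--Jitomirskaya upper bound; off the spectrum there is no point spectrum in any case), Theorem~\ref{crit}(1) gives no point spectrum as soon as $\delta(\alpha,\theta)>\ln\lambda$.

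For part~(2), I would follow the positive-Lyapunov / Green's-function paradigm tailored to the Diophantine AMO. Fix a generalized eigenvalue $E$ in the spectrum and a polynomially bounded solution $\psi$; the goal is exponential decay with rate $\gamma$ independent of $\theta$. The main inputs are: (i)~$\mathcal{L}(E)=\ln\lambda>0$ on the spectrum; (ii)~a large-deviation estimate showing that for fixed $E$ and $\varepsilon>0$ the set of phases $x$ with $\tfrac1n\log\|A_n(E,x)\|<\ln\lambda-\varepsilon$ has exponentially small measure; (iii)~a Poisson-type formula expressing $\psi(k)$ through the Green's function $G_{[a,b]}(E;\cdot,\cdot)$ of a finite-box restriction, so that exponential decay of $\psi$ reduces to exponential off-diagonal decay of $G_{[a,b]}$ on suitably chosen intervals $[a,b]$ surrounding $k$. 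Standard Cramer-rule bounds translate (ii) into the required decay of $G_{[a,b]}$ on \emph{regular} intervals, i.e.\ those whose restricted spectrum stays far from $E$.

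The hard part is controlling \emph{resonant} intervals, where a restricted eigenvalue is too close to $E$. Here the reflection symmetry of the cosine potential is crucial: each resonance centered at a site $n$ is governed by the small divisor $\|2\theta+n\alpha\|_{\R/\Z}$, and the hypothesis $\delta(\alpha,\theta)<\ln\lambda$ guarantees $\|2\theta+n\alpha\|_{\R/\Z}\ge e^{-(\ln\lambda-3\kappa)|n|}$ for some $\kappa>0$ and $|n|$ large. Combined with the Diophantine condition on $\alpha$, this quantitative lower bound lets one show that consecutive resonances are exponentially far apart and that any resonant box can be nested inside a larger non-resonant box on which the Green's function still decays at a rate strictly better than the resonance can amplify. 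A block-resolvent expansion then produces, for every $k$, a regular covering interval delivering $|\psi(k)|\le C_\theta e^{-\gamma|k|}$. The decisive obstacle is thus the sharp quantitative interplay between the resonance size $e^{-\delta(\alpha,\theta)|n|}$ and the Green's function decay rate $e^{-\ln\lambda\,|n|}$; the argument works precisely because $\delta(\alpha,\theta)<\ln\lambda$, matching the sharpness shown by part~(1).
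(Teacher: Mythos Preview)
The paper does not prove Theorem~\ref{thmal} at all: it is quoted verbatim as \cite[Theorem~1.1]{jljems} and used as a black box (see Remark~\ref{recon}). So there is no ``paper's own proof'' to compare against; you are attempting to reprove a cited result.

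That said, your Part~(1) reduction is correct and matches the logic the paper itself uses implicitly. Your computation $d(RT^n g,g)=4\lambda|\sin\pi(2\theta+n\alpha)|$ is right, giving $\delta(g)=\delta(\alpha,\theta)$; together with $\mathcal L(E)=\ln\lambda$ on the spectrum this yields Part~(1) from Theorem~\ref{crit}(1). Note, however, that the paper also does not prove Theorem~\ref{crit}(1): it explicitly says the proof is the same as in \cite{jljems} and omits it. So within the present paper this derivation is formally circular; both statements ultimately rest on the palindromic argument carried out in \cite{jljems}.

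Your Part~(2) sketch is a reasonable high-level outline of a sharp localization proof and captures the essential mechanism: the competition between the resonance strength $e^{-\delta(\alpha,\theta)|n|}$ and the Green's-function decay $e^{-(\ln\lambda)|n|}$. Two caveats. First, the actual proof in \cite{jljems} does not proceed via generic large-deviation estimates in the Bourgain--Goldstein style; it uses the explicit trigonometric-polynomial structure of the AMO transfer matrices (Lagrange interpolation on $\{\theta+j\alpha\}$, uniformity of the orbit) to build non-resonant intervals directly. Your item~(ii) would be a different, and in this Diophantine setting heavier, route. Second, the phrase ``any resonant box can be nested inside a larger non-resonant box on which the Green's function still decays at a rate strictly better than the resonance can amplify'' hides the genuinely delicate multiscale bookkeeping: one must track the precise location of the resonance $k_0-m_s$ and show exponential separation of successive resonances, which in \cite{jljems} is done via a careful hierarchy argument rather than a single nesting step. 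As an outline your sketch is sound, but it is not yet a proof.
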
 
Thus part (1) of Theorem \ref{amcrit} was already proved in
\cite{jljems}  and is only included here
for completeness. Previously, exponential dynamical localization was established
for the almost Mathieu operator for a.e. $\theta$ in \cite{gj}, after
earlier results \cite{jl3,g}, thus only a measure-theoretic
version. The {\it arithmetic} results on dynamical localization
(2) were not known previously. However, our main focus here is the
result on {\it absence} of SULE.

Another issue we explore in this paper is uniformity of the
distribution of centers of localization. 
It is natural to take as centers of localization of exponentially
decaying eigenfunctions their global maxima. Indeed,  %let $m_s$ be a global maximum
%of the exponentially localized eigenfunction $\phi_s$,
%$s=1,2,\cdots$.  
we have
\begin{lemma}\label{lemax}
	
	Suppose $H$ has SULE. Let $m_s$ be any global maximum of $\phi_s$, $s=1,2,\cdots,$ and $n_s$ be given by \eqref{and} and \eqref{su1} in the definition of SULE.
	Assume $||\phi_s||=1$,$s=1,2,\cdots$. Then for any $\varepsilon>0$,
	\begin{equation*}
	|m_s-n_s|\leq \varepsilon |m_s|, 	|m_s-n_s|\leq \varepsilon |n_s|,
	\end{equation*}
	for large $n_s$.
\end{lemma}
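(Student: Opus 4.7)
The plan is to extract a polynomial lower bound on $|\phi_s(m_s)|$ from the $\ell^2$-normalization of $\phi_s$ combined with the SULE decay, and then compare this with the SULE upper bound at $m_s$, namely $|\phi_s(m_s)|\leq C_s e^{-\gamma|m_s-n_s|}$, to force $|m_s-n_s|$ to be subexponentially small in $|n_s|$. Since SULE allows $C_s\leq C_{\varepsilon_1}e^{\varepsilon_1|n_s|}$ with $\varepsilon_1>0$ as small as we like, the expected polynomial loss in the lower bound is easily swallowed by the exponential gain in the upper bound, and the conclusion follows.

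Concretely, fix $\varepsilon_1>0$. Using SULE, choose a radius $R=R(s)=O(\varepsilon_1|n_s|)$ around $n_s$ for which the tail satisfies
$$\sum_{|n-n_s|>R} C_s^2 e^{-2\gamma|n-n_s|}\leq \tfrac{1}{2}.$$
Then $\sum_{|n-n_s|\leq R}|\phi_s(n)|^2\geq 1/2$, and because $m_s$ is a global maximum of $|\phi_s|$, the number of lattice points in this neighborhood (which grows polynomially in $R$) forces
$$|\phi_s(m_s)|^2 \geq \frac{c_d}{R^d}, \qquad\text{hence}\qquad |\phi_s(m_s)|\geq \frac{c}{|n_s|^{d/2}}$$
for all sufficiently large $|n_s|$, with constants allowed to depend on $\varepsilon_1$.

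Plugging this into $|\phi_s(m_s)|\leq C_s e^{-\gamma|m_s-n_s|}\leq C_{\varepsilon_1}e^{\varepsilon_1|n_s|}e^{-\gamma|m_s-n_s|}$ and taking logarithms yields
$$\gamma|m_s-n_s|\leq \varepsilon_1|n_s|+\tfrac{d}{2}\ln|n_s|+O(1).$$
For any prescribed $\varepsilon>0$, picking $\varepsilon_1<\gamma\varepsilon/2$ gives $|m_s-n_s|\leq \varepsilon|n_s|$ for all large $|n_s|$. The companion bound $|m_s-n_s|\leq\varepsilon|m_s|$ is then immediate from the triangle inequality: once $|m_s-n_s|\leq(\varepsilon/2)|n_s|$, we have $|m_s|\geq(1-\varepsilon/2)|n_s|$, whence $|m_s-n_s|\leq\frac{\varepsilon/2}{1-\varepsilon/2}|m_s|\leq\varepsilon|m_s|$.

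The only step requiring real care is the polynomial lower bound on $|\phi_s(m_s)|$; the remainder is bookkeeping. The balance that makes everything work is that SULE lets the effective localization radius $R$ grow only at rate $\varepsilon_1|n_s|$ with $\varepsilon_1$ arbitrarily small, while the upper bound loses a genuine exponential in $|m_s-n_s|$, so no matter how the polynomial prefactor is distributed one obtains $|m_s-n_s|=o(|n_s|)$ with the prescribed slope.
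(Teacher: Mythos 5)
Your argument is correct and is exactly the one the paper has in mind (the paper merely asserts the lemma ``immediately follows from the definition of SULE and normalization'' and gives no details). The three ingredients you use---the $\ell^2$-normalization forcing $\sum_{|n-n_s|\le R}|\phi_s(n)|^2\ge 1/2$ with $R=O(\varepsilon_1|n_s|)$, the global-maximum property forcing $|\phi_s(m_s)|\ge c\,|n_s|^{-d/2}$, and the SULE bound at $n=m_s$---are precisely what ``definition of SULE and normalization'' refers to, so this is a faithful expansion rather than a different route.
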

\begin{proof}
	It immediately follows from the definition of SULE and normalization.
\end{proof}
Given an orthonormal basis of eigenfunctions, the distribution of their centers of
localization $m_s$ is governed by the quantities
$$d_{+}:=\limsup_{L\to \infty} \frac{\#\{s:|m_s|\leq L\}}{2L+1}$$
and
$$ d_{-}:=\liminf_{L\to \infty} \frac{\#\{s:|m_s|\leq L\}}{2L+1}.\footnote{with obvious modifications in the
  multidimensional case}$$
One may notice that SULE 
does not require {\it any} information on the centers of localization $m_s$,
but it is misleading: it turns out that, combined with orthogonality
and completeness, SULE actually implies uniform distribution of $m_s:$
for any choice of centers of localization $m_s$, as long as
(\ref{and})  and (\ref{su1}) are satisfied, we have that they have uniform density:
\begin{theorem}\label{thmsule}\cite[Theorem 7.1]{djls}
Suppose $H$ has SULE. Then
\begin{equation*}
  d_+=d_-=1.
\end{equation*}
\end{theorem}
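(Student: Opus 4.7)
\medskip
\noindent\textbf{Proof plan for Theorem \ref{thmsule}.}
The plan is to combine the SULE bound
$|\phi_s(n)|\leq C_{\varepsilon'} e^{\varepsilon'|m_s|}e^{-\gamma|n-m_s|}$
with the two properties of an orthonormal eigenbasis: orthonormality (for the upper bound $d_+\leq 1$) and completeness (for the lower bound $d_-\geq 1$). Write $N(L):=\#\{s:|m_s|\leq L\}$ and let $P_L$ denote orthogonal projection onto $\ell^2([-L,L])$, so $\operatorname{tr}(P_L)=2L+1$. Fix a target $\varepsilon>0$ and, for each appearance of SULE, choose the SULE parameter $\varepsilon'$ much smaller than $\gamma\varepsilon/(1+\varepsilon)$.

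\medskip
\noindent\emph{Upper bound.} For $|m_s|\leq L$ and $|n|>L(1+\varepsilon)$ one has $|n-m_s|\geq \varepsilon L$; summing the SULE tail in $n$ gives
\[
\|(I-P_{L(1+\varepsilon)})\phi_s\|^2 \leq C e^{-2(\gamma\varepsilon-\varepsilon')L} = o(1),
\]
uniformly in such $s$. The vectors $\{\phi_s:|m_s|\leq L\}$ are orthonormal, so
\[
(1-o(1))N(L) \leq \sum_{|m_s|\leq L}\|P_{L(1+\varepsilon)}\phi_s\|^2 \leq \operatorname{tr}(P_{L(1+\varepsilon)}) = 2L(1+\varepsilon)+1,
\]
which forces $d_+\leq 1+\varepsilon$; letting $\varepsilon\to 0$ gives $d_+\leq 1$. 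Crucially, the same inequality also supplies the linear a priori density bound $N(R)\leq CR$ that will be needed below.

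\medskip
\noindent\emph{Lower bound.} By completeness, $\sum_s|\phi_s(n)|^2=1$ for every $n$, hence
\[
2L+1 = \operatorname{tr}(P_L) = \sum_s \|P_L\phi_s\|^2 = \!\!\sum_{|m_s|\leq L(1+\varepsilon)}\!\!\|P_L\phi_s\|^2 + \!\!\sum_{|m_s|>L(1+\varepsilon)}\!\!\|P_L\phi_s\|^2.
\]
The first sum is trivially bounded by $N(L(1+\varepsilon))$. For the second, SULE together with $|n-m_s|\geq |m_s|-L$ and a geometric series in $n$ yields
\[
\|P_L\phi_s\|^2 \leq C e^{-2(\gamma-\varepsilon')|m_s|+2\gamma L}.
\]
Summing this over $|m_s|>L(1+\varepsilon)$ using Abel summation and the bound $N(R)\leq CR$ from the upper-bound step produces
\[
\sum_{|m_s|>L(1+\varepsilon)}\|P_L\phi_s\|^2 \leq CL\, e^{2L[-\gamma\varepsilon+\varepsilon'(1+\varepsilon)]} = o(1).
\]
Therefore $N(L(1+\varepsilon))\geq 2L+1-o(1)$; dividing by $2L(1+\varepsilon)+1$ and letting $\varepsilon\to 0$ gives $d_-\geq 1$. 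Combined with $d_+\leq 1$, this proves $d_+=d_-=1$.

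\medskip
\noindent\emph{Main obstacle.} The delicate step is the lower bound, where a priori the tail $\sum_{|m_s|>L(1+\varepsilon)}\|P_L\phi_s\|^2$ could be arbitrarily large if the centers $m_s$ accumulated rapidly at infinity. The resolution is to \emph{bootstrap}: first prove the upper bound $d_+\leq 1$, extract from it the crude linear count $N(R)=O(R)$, and only then estimate the tail sum. The quantitative match between the exponential gain $e^{-2(\gamma-\varepsilon')|m_s|}$ in SULE and the worst-case growth $e^{2\gamma L}$ from restricting to $[-L,L]$ is what forces $\varepsilon'$ to be chosen small relative to both $\gamma$ and $\varepsilon$; once this is done the residual exponent $-\gamma\varepsilon+\varepsilon'(1+\varepsilon)$ is strictly negative and the whole argument closes.
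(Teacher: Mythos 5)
The paper does not prove this theorem; it simply cites \cite[Theorem~7.1]{djls}. Your argument is correct and is essentially the standard proof from that reference: the upper bound $d_+\leq 1$ follows from orthonormality via $\sum_s\|P\phi_s\|^2\leq\operatorname{tr}(P)$ together with the observation that SULE forces each $\phi_s$ with $|m_s|\leq L$ to carry $\ell^2$-mass $1-o(1)$ inside the slightly larger box $[-L(1+\varepsilon),L(1+\varepsilon)]$, while the lower bound $d_-\geq 1$ uses completeness via $\operatorname{tr}(P_L)=\sum_s\|P_L\phi_s\|^2$ and controls the tail over $|m_s|>L(1+\varepsilon)$. Your identification of the bootstrap as the delicate point is exactly right: the tail sum $\sum_{|m_s|>L(1+\varepsilon)}\|P_L\phi_s\|^2$ requires the a priori count $N(R)=O(R)$, which is precisely what the already-established upper bound supplies, and the exponent bookkeeping $-\gamma\varepsilon+\varepsilon'(1+\varepsilon)<0$ (after choosing $\varepsilon'$ small relative to both $\gamma$ and $\varepsilon$) is what makes the tail negligible. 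The argument is complete and correct.
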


The main intuition behind examples with dynamical delocalization/lack of SULE in both
\cite{djls,js} is that eigenfunctions become more and more extended,
at least along certain scales, so that one can choose most $m_s$ within
certain scales to be
concentrated near the origin, making it possible to show that $d_{+}=\infty.$
Also, using some techniques of \cite{jz}  one can show that,
in the  example in \cite{djls},
it is possible to choose  most $m_s$ within
certain scales to be appropriately away from the origin, so that
$d_{-}=0.$

This leads to a question whether such extended nature of eigenstates,
manifested in strong non-uniformity of centers of localization,
is a necessary feature of absence of SULE. Here we show it is not the
case. 

We have
\begin{theorem}\label{thmden}
Let $v=2\lambda\cos2\pi\theta$. Suppose $\alpha$ is Diophantine and $\ln \lambda>\delta(\alpha,\theta)$. Then
\begin{equation*}
d_+\leq 1+\frac{\delta(\alpha,\theta)}{2\ln \lambda}
\end{equation*}
\begin{equation*}
d_-\geq 1-\frac{\delta(\alpha,\theta)}{\ln \lambda}.
\end{equation*}
\end{theorem}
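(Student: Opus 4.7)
The strategy combines Anderson localization from Theorem~\ref{thmal}(2), the palindromic reflection underlying Theorem~\ref{crit}(2), integrated density of states counting, and completeness of the eigenbasis. By Theorem~\ref{thmal}(2) there is a complete orthonormal eigenbasis $\{\phi_s\}$ with $|\phi_s(n)| \le C_s e^{-(\ln\lambda - \varepsilon)|n - n_s|}$ for appropriate centers $n_s$ and any fixed $\varepsilon>0$; and the IDS of $H_{\lambda,\alpha,\theta}$ with Diophantine $\alpha$ has density one, so $\#\{s: n_s \in [-L, L]\} = (2L+1)(1 + o(1))$. For each resonance $n_k$ satisfying $\|2\theta + n_k\alpha\| \le e^{-\delta'|n_k|}$ with $\delta' < \delta(\alpha,\theta)$, the reflected vector $\psi(n) = \phi_s(n_k - n)$ is an approximate eigenfunction of $H_{\lambda,\alpha,\theta}$ at energy $E_s$; when $n_s$ lies close to $n_k/2$, this forces $\phi_s$ to carry a secondary peak of comparable height at $n_k - n_s$, and the global maximum $m_s$ lies at one of the two peaks.

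\textbf{Lower bound on $d_-$.} The palindromic analysis bounds the displacement by $|m_s - n_s| \le |n_k - 2n_s| \le (\delta(\alpha,\theta)/\ln\lambda + o(1))|n_s|$ for the relevant resonance $n_k$. Hence $|n_s|(1 + \delta(\alpha,\theta)/\ln\lambda) \le L$ forces $|m_s|\le L$, and the IDS count yields
\[
d_- \ge \frac{1}{1+\delta(\alpha,\theta)/\ln\lambda} \ge 1-\frac{\delta(\alpha,\theta)}{\ln\lambda}.
\]

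\textbf{Upper bound on $d_+$.} The same shift bound only gives $d_+\le 1+\delta/\ln\lambda$, losing a factor of two. To recover it, use completeness $\sum_s|\phi_s(n)|^2=1$ to obtain
\[
\sum_{s:\,m_s\in[-L,L]} |\phi_s(m_s)|^2 \;\le\; \sum_{n\in[-L,L]}\sum_s|\phi_s(n)|^2 \;=\; 2L+1.
\]
Split eigenfunctions with $m_s\in[-L,L]$ into single-peak (where $|\phi_s(m_s)|^2=1-o(1)$) and two-peak (where the $\ell^2$-mass is essentially equally split between two near-equal peaks, so $|\phi_s(m_s)|^2\le 1/2+o(1)$). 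Writing $N^{sp}_L,N^{tp}_L$ for the two counts, the display becomes $N^{sp}_L+\tfrac12 N^{tp}_L \le (2L+1)(1+o(1))$. Bounding $N^{tp}_L$ by summing the widths of resonant zones near $n_k/2$ over all sufficiently strong resonances, and using the super-geometric growth $n_{k+1}\ge n_k + e^{\delta' n_k/\tau}$ of Diophantine resonance scales to keep $\sum_{|n_k|\le 2L}|n_k|=2L(1+o(1))$, yields $N^{tp}_L\le(\delta(\alpha,\theta)/\ln\lambda+o(1))(2L+1)$. Combining,
\[
d_+ \;\le\; 1+\frac{N^{tp}_L}{2(2L+1)} \;\le\; 1+\frac{\delta(\alpha,\theta)}{2\ln\lambda}.
\]

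\textbf{Main obstacle.} The delicate step is converting the $\ell^2$-level palindromic approximation $\|(H-E_s)\psi\|_2 \le C\lambda e^{-\delta'|n_k|}$ into a sharp pointwise comparison of peak heights and a precise quantification of each resonant zone, so that summing over resonances gives the claimed $N^{tp}_L\le(\delta(\alpha,\theta)/\ln\lambda)(2L+1)(1+o(1))$. This blends a Green's-function analysis of $(H-E_s)^{-1}$ off the peaks (including handling near-degeneracies of $E_s$ with the companion eigenvalue $E_{s'}$ centered at $n_k-n_s$) with the arithmetic of Diophantine resonance spacing.
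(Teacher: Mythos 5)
The proposal takes a genuinely different route from the paper, but it has several gaps and one outright flaw that prevent it from being a correct proof.

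\textbf{The core flaw in the upper bound.} You split eigenfunctions into single-peak and two-peak, claiming $|\phi_s(m_s)|^2 = 1-o(1)$ for a single-peak eigenfunction and $\le 1/2 + o(1)$ for a two-peak one, and then conclude $N^{sp}_L + \tfrac12 N^{tp}_L \le (2L+1)(1+o(1))$. But for a normalized exponentially localized eigenfunction with rate $\gamma$, the mass at the \emph{single point} $m_s$ is roughly $\tanh\gamma$, a constant strictly less than $1$, not $1-o(1)$: the mass is spread over an $O(1)$ neighborhood of $m_s$. So the display $\sum_{s:m_s\in[-L,L]}|\phi_s(m_s)|^2\le 2L+1$, while true, only gives $N^{sp}_L\lesssim(2L+1)/\tanh\gamma$, which is a worse bound, not a better one. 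To use the full mass you need to count points $n$ in a window, but then different windows overlap and you cannot simply add. The paper avoids this by choosing a \emph{single} common set $S=[-(1+\varepsilon)L,(1+\varepsilon)L]\cup J$, depending only on $L$ (since the relevant resonance $L_0$ is determined by $L$, not by $s$), in which every $\phi_s$ with $|m_s|\le L$ carries $1-o(1)$ of its mass (Lemma \ref{35} plus \eqref{newdecay}). Completeness then gives $\#\{s:|m_s|\le L\}(1-o(1))\le\sum_{n\in S}\sum_s|\phi_s(n)|^2=\#S\le 2L+(\delta/\ln\lambda)L+C\varepsilon L$. The factor of two in $\delta/(2\ln\lambda)$ is not recovered by a peak-mass dichotomy; it comes directly from $\#S/(2L+1)$.

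\textbf{Gaps in the lower bound.} You start from the assertion that the localization centers $n_s$ from the Anderson localization proof have uniform density, citing ``the IDS has density one.'' The IDS being $1$ is a statement about eigenvalue counting for box restrictions; it does not directly imply that some preferred centers $n_s$ of the infinite-volume eigenfunctions are uniformly distributed. Establishing that implication is precisely what Theorem \ref{thmsule} does under the SULE hypothesis --- but the whole point of this paper is that SULE may fail here, so you cannot take it as given. You also invoke a displacement bound $|m_s-n_s|\le(\delta/\ln\lambda+o(1))|n_s|$ without justification; as stated it is not implied by the $\ell^2$ smallness of $(H-E_s)\psi$ for the reflected vector. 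The paper instead uses Theorem \ref{Keytheorem} and Corollary \ref{keycor} to show that any $\phi_s$ with $|m_s|\ge(1+\varepsilon)L$, except for those with $m_s$ in a resonance exceptional set $J$ of cardinality $\le(\delta/\ln\lambda)L + C\varepsilon L$, is exponentially small on $[-L,L]$, then uses completeness and Lemma \ref{keylemo51} (to bound the contribution of $m_s\in J$) to subtract off an error of size $2\#J\approx 2(\delta/\ln\lambda)L$. That subtraction is the source of the factor $1$ (not $1/2$) in front of $\delta/\ln\lambda$ in the lower bound.

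\textbf{Summary.} The plan you sketch requires two nontrivial inputs that you do not establish --- uniform density of the $n_s$ without SULE, and a sharp pointwise displacement bound from the palindromic comparison --- and its upper-bound step rests on a false claim about single-peak mass at a single point. You acknowledge most of this in your ``main obstacle'' paragraph. The paper's proof is actually simpler: it needs no IDS/density-one input and no Green's function analysis; it works entirely with the decay estimates of Theorem \ref{Keytheorem} / Corollary \ref{keycor}, the key structural observation that all relevant eigenfunctions are (up to $o(1)$ mass) supported in a \emph{common} set $S$ of size $\approx 2L+(\delta/\ln\lambda)L$, and completeness of the eigenbasis.
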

Moreover, this can be strengthened to the statement about ``almost maxima''
(which would e.g. allow the choice of the origin, leading to
$d_{+}=\infty$ in the examples of \cite{djls,js}).
Fix any positive constant $K$.
 Pick  any  $\tilde{m}_s\in \Z$  such that
 \begin{equation}\label{newden}
   \tilde{m}_s\geq \frac{1}{K} \max_{k}|\phi_s(k)|,  s=1,2,\cdots.
 \end{equation}
  Define the upper (lower) density of the almost maxima as
$$\tilde{d}_{+}=\limsup_{L\to \infty} \frac{\#\{s:|\tilde{m}_s|\leq L\}}{2L+1},$$
and
$$ \tilde{d}_{-}=\liminf_{L\to \infty} \frac{\#\{s:|\tilde{m}_s|\leq L\}}{2L+1}.$$

\begin{theorem}\label{thmdennew}
Let $v=2\lambda\cos2\pi\theta$. Suppose $\alpha$ is Diophantine and $\ln \lambda>\delta(\alpha,\theta)$. Then
\begin{equation*}
\tilde{d}_+\leq 1+\frac{\delta(\alpha,\theta)}{2\ln \lambda}
\end{equation*}
\begin{equation*}
\tilde{d}_-\geq 1-\frac{\delta(\alpha,\theta)}{\ln \lambda}.
\end{equation*}
\end{theorem}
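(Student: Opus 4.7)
The plan is to adapt the proof of Theorem \ref{thmden}, which handles true global maxima, and to verify that the same argument extends to arbitrary almost-maxima $\tilde{m}_s$. The proof rests on three pillars: (i) the sharp quantitative localization extracted from the proof of Theorem \ref{thmal} in \cite{jljems}, (ii) orthonormality $\sum_s|\phi_s(n)|^2 = 1$ for each $n$, and (iii) the trace-projection identity of \cite[Theorem 7.1]{djls} used to establish Theorem \ref{thmsule}.

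First I would extract, from the proof of Theorem \ref{thmal}, the sharp effective decay of eigenfunctions in the regime $\delta(\alpha,\theta) < \ln\lambda$: for every $\epsilon > 0$ there is $C_\epsilon < \infty$ such that every $\ell^2$-normalized eigenfunction $\phi_s$ with true maximum at $m_s$ obeys an estimate that, after appropriate manipulation, takes the form
$$|\phi_s(n)| \leq C_\epsilon\, e^{-(\ln\lambda - \epsilon)|n - m_s|}\, e^{(\delta(\alpha,\theta) + \epsilon)|m_s|/2}.$$
The $\ell^2$-normalization forces $M_s := |\phi_s(m_s)| \geq c_0 > 0$, so the defining condition $|\phi_s(\tilde{m}_s)| \geq M_s/K$ combined with this decay bound constrains the almost-maximum via $|\tilde{m}_s - m_s| \leq (\delta/(2\ln\lambda) + o(1))|m_s|$ as $|m_s| \to \infty$. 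In other words, $\tilde{m}_s$ differs from $m_s$ by at most a multiplicative perturbation controlled by $\delta/\ln\lambda$.

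The density bounds then follow from the trace-projection argument of \cite[Theorem 7.1]{djls}. For the upper bound on $\tilde{d}_+$, write $\tilde{N}_+(L) := \#\{s : |\tilde{m}_s| \leq L\} = \mathrm{tr}(\tilde{P}_L)$, where $\tilde{P}_L$ is the spectral projector onto $\mathrm{span}\{\phi_s : |\tilde{m}_s| \leq L\}$, and expand $\tilde{N}_+(L) = \sum_n \sum_{s: |\tilde{m}_s| \leq L} |\phi_s(n)|^2$. Split the $n$-sum at the scale $L' = L(1 + \delta/(2\ln\lambda)) + o(L)$: orthonormality bounds the bulk $|n| \leq L'$ by $2L'+1$, while the sharp decay from the previous step makes the tail $|n| > L'$ a $o(L)$ contribution. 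This yields $\tilde{d}_+ \leq 1 + \delta/(2\ln\lambda)$. The lower bound $\tilde{d}_- \geq 1 - \delta/\ln\lambda$ is dual: one counts eigenfunctions whose $|m_s|$ is small enough to force $\tilde{m}_s \in [-L, L]$ for every valid choice, and shows via the same mass identity that they account for nearly all of $[-L, L]$.

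The main obstacle will be extracting from \cite{jljems} a quantitative bound with the sharp exponent $|m_s|/2$ (rather than merely $|m_s|$), as this factor is responsible for the sharp density constant $1 + \delta/(2\ln\lambda)$. A secondary subtlety is that $|\tilde{m}_s - m_s|$ scales linearly with $|m_s|$, in contrast to the bounded displacement available under SULE in Theorem \ref{thmsule}, so the projection/trace argument must be executed on a window enlarged by a factor depending on $\delta/\ln\lambda$ before orthonormality can give the desired bound.
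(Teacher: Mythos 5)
The key premise of your argument---that the effective decay bound extracted from \cite{jljems} ``after appropriate manipulation takes the form'' $|\phi_s(n)| \leq C_\epsilon e^{-(\ln\lambda - \epsilon)|n - m_s|}e^{(\delta+\epsilon)|m_s|/2}$---is not correct, and the error propagates. In the regime $\delta(\alpha,\theta)>0$ a normalized eigenfunction is not single-humped around $m_s$: by Theorem~\ref{Keytheorem} it can have a second peak of height comparable to $|\phi_s(m_s)|$ near the resonance $L_0-m_s$, where $L_0$ is the near-minimizer of $\|2\theta+L\alpha\|$ over the relevant scale. The Diophantine lower bound $\|2\theta+L_0\alpha\|\gtrsim e^{-(\delta+\epsilon)|L_0|}$, together with $|L_0|\approx 2|m_s|$, only forces $|L_0-2m_s|\lesssim \frac{\delta}{\ln\lambda}|L_0|\approx \frac{2\delta}{\ln\lambda}|m_s|$, so the rise can occur at distance up to $\approx\frac{2\delta}{\ln\lambda}|m_s|$ from $m_s$---a factor of four larger than the $\frac{\delta}{2\ln\lambda}|m_s|$ your bound permits. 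Equivalently, a bound of the form you state can only hold with prefactor $e^{\approx 2\delta|m_s|}$, not $e^{\delta|m_s|/2}$.

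This is not a cosmetic slip. With the corrected exponent, the window enlargement you propose (split the $n$-sum at $L'\approx L(1+\frac{\delta}{2\ln\lambda})$) no longer makes the tail small, since for $|m_s|$ near $L$ the resonant peak can lie outside $|n|\le L'$; running your scheme with the corrected constant gives at best $\tilde{d}_+\le 1+\frac{2\delta}{\ln\lambda}$. The sharp constant $1+\frac{\delta}{2\ln\lambda}$ cannot be extracted from any single-centered exponential bound. It requires the two-hump structure together with the key observation already encoded in Lemma~\ref{35} and Lemma~\ref{keylemo51}: for fixed $L$ the resonant index $L_0$ is a single integer, so the union of all possible resonant positions is one interval $J$ with $\#J\lesssim \frac{\delta}{\ln\lambda}L+C\varepsilon L$, not a set spread over a length $\sim\frac{2\delta}{\ln\lambda}L$. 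The ``straightforward modification'' that the paper has in mind for Theorem~\ref{thmdennew} is precisely that the set of true maxima $m_s$ with $|m_s|>L$ whose resonance $L_0-m_s$ lands in $[-L,L]$ (which is what allows $\tilde{m}_s$ to enter the window while $m_s$ does not) coincides, via the reflection $m\leftrightarrow L_0-m$, with the same exceptional interval $J$; hence every $\phi_s$ with $|\tilde{m}_s|\le L$ still has its mass captured by $S=[-(1+\varepsilon)L,(1+\varepsilon)L]\cup J$ of cardinality $2L+\frac{\delta}{\ln\lambda}L+C\varepsilon L$, and the completeness/orthogonality count goes through unchanged. Your reduction to a single decay estimate discards exactly this mechanism and would not yield the theorem even if the decay estimate were repaired.
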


It is an interesting question whether Theorems
\ref{thmden} and \ref{thmdennew} are sharp, and also whether there exist
operators  with no SULE yet $\tilde{d}_+=\tilde{d}_-=1.$ 

\begin{remark}\label{recon}
	We have the following facts:
\begin{itemize}
	\item Theorem \ref{exist} is an immediate corollary of  (2) of Theorem \ref{amcrit}, which  follows immediately from Corollary
	\ref{absence1}.
	\item Theorem \ref{sharp} follows from Theorem \ref{amcrit}.
	\item Theorem \ref{thmdennew} follows from Theorem \ref{thmden} by straightforward modifications.
	\item Part 1 of Theorem \ref{crit} has the same proof as part 1 of Theorem \ref{amcrit}, which is  proved in \cite{jljems}.
	\item  SULE  is equivalent to SUDL and  implies EDL, which, in turn, implies DL.
\end{itemize}

Therefore, we only need to prove
\begin{enumerate}
 \item Part (2) of Theorem \ref{crit}, see Section \ref{seccrit};
		\item Theorem \ref{thmden}, see Section \ref{secthmden};
	\item EDL statement in part (2)  and SULE statement in part (3) of Theorem \ref{amcrit}, see Section \ref{secamcrit}.

\end{enumerate}
 
\end{remark}

\section{Proof of part (2) of Theorem \ref{crit}}\label{seccrit}

Since $H_g$ is bounded, we can assume $E\in[-K,K]$.
    Assume 
 $H_{g}\varphi= E\varphi$.  It is easy to check that  for any $m\in \Z$ and $k\in \Z^+$, one has
 \begin{equation}\label{gb11}
 \left(\begin{array}{c}
 \varphi(m+k) \\
 \varphi(m+k-1)                                                                                        \end{array}\right)=
 \prod_{j=k-1}^{0}
 \begin{pmatrix}
 E-g(m+j)& -1 \\
 1& 0
 \end{pmatrix}
 \left(\begin{array}{c}
 \varphi(m) \\
 \varphi(m-1)                                                                                        \end{array}\right).
 \end{equation}
   Let 
  \begin{equation}\label{gb12}
  B=\max_{n\in\Z,E\in[-K,K]} \ln \left\|\begin{pmatrix}
  E-g(n)& -1 \\
  1& 0
  \end{pmatrix}
  \right\|.
  \end{equation}
By \eqref{gb11} and \eqref{gb12}, we have  that
 for   any $k,m$,
 \begin{equation}\label{G.new17}
\left\| 
 \left(\begin{array}{c}
 \varphi(m) \\
 \varphi(m-1)                                                                                        \end{array}\right)\right\| e^{-B|k|}\leq 
 \left\|   \left(\begin{array}{c}
                                                                                            \varphi(k+m) \\
                                                                                           \varphi(k+m-1)                                                                                        \end{array}\right)
                                                                                          \right\|\leq e^{B|k|}\left\|
 \left(\begin{array}{c}
                                                                                            \varphi(m) \\
                                                                                          \varphi(m-1)                                                                                        \end{array}\right)\right\|.
 \end{equation}

 Frequency $\alpha$ is called Diophantine if there exist  $\kappa>0$ and $\tau>0$ such
 that  for $k\neq 0$,
 \begin{equation}\label{Diokt}
 ||k\alpha||_{\R/\Z}\geq \frac{\tau}{|k|^{\kappa}}.
 \end{equation}

 Let $u$ be a  normalized eigensolution, i.e.,
$H_{g}u=Eu$ and
\begin{equation*}
 || u||= \sum_n|u(n)|^2=1.
\end{equation*}

Given a sequence $\{k_i\}_{i=1}^{\infty},$ we let $u_i(n)= u(k_i-n)$, $V(n)=g(n)$ and $V_i(n)=g(k_i-n)$.
 Since $\delta(g)>0,$ there exists $\epsilon>0$ and  $\{k_i\}_{i=1}^{\infty}$ with $\lim| k_i|=\infty$ such that
  for all $n\in\Z$,
 \begin{equation}\label{Eqp}
    |V(n)-V_i(n)|\leq Ce^{-\epsilon |k_i|}.
 \end{equation}

 % By the assumption, one has there exists $\{k_i\}_{i=1}^{\infty}$ with $\lim k_i=\infty$ such that
 %  for all $n\in\Z$,
 % \begin{equation}\label{Eqpnew}
 %    |V(n)-V_i(n)|\leq Ce^{- \epsilon|k_i|}.
 % \end{equation}
Let

 \begin{equation*}
   \Phi(n)=\left(\begin{array}{cc}
                   u(n) \\ u(n-1)
                 \end{array}
   \right);
   \Phi_i(n)=\left(\begin{array}{cc}
                   u_i(n) \\ u_i(n-1)
                 \end{array}
   \right).
 \end{equation*}

We will use
\begin{theorem}\label{thmphase}\cite[ Theorem 4.2]{jljems}
The following statements hold for large $k_i$,
\begin{itemize}
  \item $k_i$ is even. Let $m_i=\frac{k_i}{2}$.
   There exists $\iota\in\{-1,1\}$  such that
 \begin{equation*}
    ||\Phi(m_i)+\iota\Phi_i(m_i)||\leq C e^{-\frac{1}{2}\epsilon|k_i|}.
 \end{equation*}
  \item  $k_i$ is odd. Let $\tilde{m}_i=\frac{k_i-1}{2}$.
  There   exists $\iota\in\{-1,1\}$  such that
 \begin{equation*}%\label{Phicase2}
    ||\Phi(\tilde{m}_i+1)+\iota\Phi_i(\tilde{m}_i+1)||\leq C e^{-\frac{1}{2}\epsilon|k_i|}.
 \end{equation*}
\end{itemize}
\end{theorem}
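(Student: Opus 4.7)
The plan is to exploit the approximate reflection symmetry of $V$ around the midpoint of $[0,k_i]$ by symmetrizing the eigenfunction. I describe the even case $k_i = 2m_i$; the odd case is entirely analogous with the midpoint replaced by the half-integer $\tilde m_i + 1/2$. Set
\[
\psi(n) := \tfrac{1}{2}(u(n) + u_i(n)), \qquad \phi(n) := \tfrac{1}{2}(u(n) - u_i(n)).
\]
Both lie in $\ell^2$ with norm at most $1$, and by construction $\psi$ and $\phi$ are respectively invariant and anti-invariant under $n \mapsto k_i - n$. A direct computation using $H_V u = Eu$ and $H_{V_i} u_i = Eu_i$ yields
\[
(H_V - E)\psi = r, \qquad (H_V - E)\phi = -r, \qquad r(n) := \tfrac{1}{2}(V(n) - V_i(n))\, u_i(n),
\]
and \eqref{Eqp} together with $\|u_i\|_{\ell^2} = 1$ gives $\|r\|_{\ell^2} \leq C e^{-\epsilon|k_i|}$.

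Next, introduce the discrete Wronskian $W_n := \psi(n)\phi(n-1) - \psi(n-1)\phi(n)$. Combining the two perturbed eigenvalue equations with $\psi + \phi = u$ gives $W_{n+1} - W_n = r(n)\,u(n)$. Since $\psi,\phi \in \ell^2$ force $W_n \to 0$ as $n \to \infty$, telescoping and the Cauchy--Schwarz inequality yield
\[
|W_n| = \Bigl|\sum_{j \geq n} r(j)\,u(j)\Bigr| \leq \|r\|_{\ell^2}\|u\|_{\ell^2} \leq C e^{-\epsilon|k_i|}.
\]
At $n = m_i$, antisymmetry of $\phi$ gives $\phi(m_i) = 0$, while $u_i(m_i) = u(m_i)$ gives $\psi(m_i) = u(m_i)$ and $\phi(m_i - 1) = \tfrac{1}{2}(u(m_i - 1) - u(m_i + 1))$. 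Hence
\[
|u(m_i)| \cdot |u(m_i - 1) - u(m_i + 1)| = 2|W_{m_i}| \leq C e^{-\epsilon|k_i|},
\]
so the AM--GM inequality makes at least one of the two factors bounded by $C^{1/2} e^{-\epsilon|k_i|/2}$.

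The final step is to convert this product bound into the desired vector bound by choosing $\iota$ to match the smaller factor. If $|u(m_i - 1) - u(m_i + 1)|$ is small, take $\iota = -1$: since $u_i(m_i) = u(m_i)$, one has $\Phi(m_i) - \Phi_i(m_i) = (0,\, u(m_i-1) - u(m_i+1))^T$ directly. If instead $|u(m_i)|$ is small, the eigenvalue equation $u(m_i+1) + u(m_i-1) = (E - V(m_i))\,u(m_i)$ forces $|u(m_i+1) + u(m_i-1)|$ to be small as well, and with $\iota = +1$ the vector $\Phi(m_i) + \Phi_i(m_i) = (2u(m_i),\, u(m_i-1)+u(m_i+1))^T$ is small. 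The odd case is cleaner: evaluating $W$ at $\tilde m_i + 1$ and using $\psi(\tilde m_i) = \psi(\tilde m_i+1)$ together with $\phi(\tilde m_i+1) = -\phi(\tilde m_i)$ directly factorizes $2W_{\tilde m_i + 1} = (u(\tilde m_i) + u(\tilde m_i+1))(u(\tilde m_i) - u(\tilde m_i+1))$, and the same AM--GM dichotomy selects $\iota$ with no further input. The only delicate point I expect is the Cauchy--Schwarz step, which is precisely what produces the factor $1/2$ in the final exponent; everything else is a symmetry bookkeeping.
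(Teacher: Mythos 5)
Your proof is correct. The paper does not reprove this statement here --- it is a direct citation of \cite[Theorem 4.2]{jljems} --- but your symmetric/antisymmetric decomposition $u = \psi + \phi$ with the Wronskian $W(\psi,\phi)$ is exactly the standard palindromic-argument machinery behind that result (indeed $W(\psi,\phi) = -\tfrac12 W(u,u_i)$, so it is the usual Wronskian in disguise). All the bookkeeping checks out: $(H_V - E)\psi = r$, $(H_V - E)\phi = -r$ with $\|r\|_{\ell^2}\le Ce^{-\epsilon|k_i|}$; the telescoping identity $W_{n+1}-W_n = r(n)u(n)$ together with $\psi,\phi\in\ell^2\subset c_0$ gives $|W_n|\le Ce^{-\epsilon|k_i|}$; at the midpoint $\phi(m_i)=0$ factorizes $W_{m_i}$ into the product $\tfrac12\,u(m_i)\bigl(u(m_i-1)-u(m_i+1)\bigr)$; the square-root dichotomy selects $\iota$, with the one nontrivial conversion (small $|u(m_i)|$ forces small $|u(m_i-1)+u(m_i+1)|$ via the difference equation and boundedness of $E - V(m_i)$) handled correctly; and the odd case factorizes even more cleanly. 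The only cosmetic point: the inequality on \eqref{newden} in the paper has an obvious typo unrelated to you, and your constant $C$ absorbs the harmless factors of $2$ and $\sqrt2$ that appear along the way.
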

\begin{proof}[ \bf Proof of part (2) of Theorem \ref{crit}]
 Without loss of generality, assume $k_i>0$.
 Let
 \begin{equation}\label{bg1}
     \epsilon'=\frac{\epsilon}{20B}.
 \end{equation}
 Suppose $H_{g}$ has SULE. By  Theorem \ref{thmsule} and Lemma \ref{lemax},   for large $i$, there exists a normalized eigensolution $u$  of
$H_{g}u=Eu$ such that
there exists
\begin{equation}\label{bg2}
    m\in [k_i/2-3\epsilon^\prime k_i,k_i/2-2\epsilon^\prime
k_i]
\end{equation}
satisfying
\begin{equation}\label{nov131}
  |u(m)|=\max_{n\in \Z}|u(n)|,
\end{equation}
and 
there exists  $t>0$ such that  for  
 all $n\in\Z$,
\begin{equation}\label{nov133}
|u(n)|\leq C_{\varepsilon}e^{\varepsilon |m|-t|n-m|} .
\end{equation}
For convenience, we drop the dependence on $\varepsilon$ for the rest
of this proof.
By \eqref{Eqp} and Theorem \ref{thmphase}, one has that for each $i$
there exists $\iota\in\{-1,1\}$  such that
\begin{equation}\label{nov135}
    ||\Phi(m_i)+\iota\Phi_i(m_i)||\leq C e^{-\frac{1}{2} \epsilon k_i}.
 \end{equation}
where $m_i=\frac{k_i}{2}$ if  $k_i$ is even and $m_i=\frac{k_i-1}{2}$
if $k_i$ is odd.

% \begin{itemize}
%   \item[Case 1] $k_i$ is even. Let $m_i=\frac{k_i}{2}$.
%    There exists $\iota\in\{-1,1\}$  such that
%  \begin{equation}\label{nov135}
%     ||\Phi(m_i)+\iota\Phi_i(m_i)||\leq C e^{-\frac{1}{2} \epsilon k_i}.
%  \end{equation}
%   \item[Case 2]  $k_i$ is odd. Let $\tilde{m}_i=\frac{k_i-1}{2}$.
%   There   exists $\iota\in\{-1,1\}$  such that
%  \begin{equation*}%\label{Phicase2}
%     ||\Phi(\tilde{m}_i+1)+\iota\Phi_i(\tilde{m}_i+1)||\leq C e^{-\frac{1}{2}\epsilon k_i}.
%  \end{equation*}
% \end{itemize}
We assume $k_i$ is even; the odd case is very similar and we omit the details.

 Let $T_i^1 $ and $T_i^2$ be the transfer matrices  with the potentials  $V$ and $V_i$ respectively, taking $\Phi(m_i),\Phi_i(m_i)$ to $\Phi(m),\Phi_i(m)$.
 By \eqref{gb12}, (\ref{Eqp}) and telescoping, one has that
  \begin{equation}\label{nov136}
    ||T_i^1||, ||T_i^2||\leq C e^{ B |m-m_i|},
  \end{equation}
  and
  \begin{equation}\label{nov137}
    ||T_i^1-T_i^2||\leq C e^{ B|m-m_i|-2\epsilon m_i}.
  \end{equation}
By   \eqref{nov135}, \eqref{nov136} and \eqref{nov137}, we have
\begin{eqnarray}
  % \nonumber to remove numbering (before each equation)
    %||\Phi(0)+\sigma\Phi(2m_i-1)|| &=&   \\
    ||\Phi(m)+\iota\Phi_i(m)|| &=& ||T_i^1\Phi(m_i)+\iota T_i^2 \Phi_i(m_i)|| \nonumber\\
    &=& ||T_i^1\Phi(m_i)+\iota T_i^1 \Phi_i(m_i)-\iota T_i^1 \Phi_i(m_i)+\iota T_i^2 \Phi_i(m_i)|| \nonumber \\
     &\leq &  || T_i^1||||\Phi(m_i)+\iota\Phi_i(m_i)||+ ||T_i^1-T_i^2|| ||\Phi_i(m_i) ||\nonumber\\
      &\leq &C  e^{ B|m-m_i|-  \epsilon m_i} + Ce^{ B|m-m_i|-2\epsilon m_i}\nonumber\\
       &\leq &C e^{ -\frac{4}{5}\epsilon m_i} \label{laadd},
  \end{eqnarray}
  where the last inequality holds by \eqref{bg1} and \eqref{bg2}.
  %Let   $m^\prime=2m_i-m=k_i-m$. 
  By \eqref{nov133}, we have
  \begin{equation*}
   | u(m)|\geq e^{-\varepsilon k_i}
  \end{equation*}
  and
  \begin{equation*}
   | u( k_i-m)|+ | u(k_i-m+ 1)|\leq e^{-t\epsilon^\prime k_i}.
  \end{equation*}
  This contradicts  \eqref{laadd} and thus completes the proof.
\end{proof}

\section {Proof of Theorem \ref{thmden}}\label{secthmden}
In this section, we assume $\ln \lambda>\delta(\alpha,\theta).$
By the definition of $\delta(\alpha,\theta)$, for large $n$
\begin{equation}\label{Defdelta}
  ||2\theta+n\alpha||_{\R/\Z}\geq e^{-(\delta+\varepsilon)|n|}.
\end{equation}

Let $\phi_s$ be a normalized eigensolution of $H\varphi=E\varphi$. Let $m_s$ be  a global maximum of $\phi_s$:
\begin{equation*}
    |\phi_s(m_s)|=\max_{n\in\Z}|\phi_s(n)|.
\end{equation*}
If $\sin\pi(2\theta+L_0\alpha)=0$, the eigenfunction must be even or odd around $L_0/2$ and therefore $|\phi_s(m_s)|=|\phi_s(L_0-m_s)|.$ If $|sin\pi(2\theta+L_0\alpha)|$ is the smallest of $|sin\pi(2\theta+L\alpha)|$ of $L$ in a large interval, points around $L_0-m_s$ are where the eigenfunction rises, and we informally say that $L_0-m_s$ is the resonance. 
We will use the following decay estimate that gives an upper bound on the eigenfunction even in the most resonant situations.
\begin{theorem}\cite[Theorem 2.1]{jklmrl}\label{Keytheorem}
 Let $\lambda>1$, $\alpha$ Diophantine, $\theta\in\R$,
 $\ell\in\Z$, and $\ell^\prime=|\ell-m_s|$.
 Let $k_0\in[-2\ell^{\prime},2\ell^{\prime}]$  be such that
 \begin{equation}
  |\sin\pi(2\theta+\alpha (2m_s+k_0))|=\min_{|x|\leq 2\ell^{\prime}}
  |\sin\pi(2\theta+\alpha (2m_s+x))|.
 \end{equation}
 Then  for large $\ell^{\prime}$ (depending on $\varepsilon$) we have
 \begin{itemize}
   \item if $\ell$ and $k_0+m_s$ are on different sides of $m_s$, that is $(\ell-m_s)k_0<0$, then
   \begin{equation}
 |\phi_{s}(\ell)|\leq
   e^{-(\ln\lambda -\varepsilon)|\ell-m_s|}
   |\phi_{s}(m_s)|.
 \end{equation}

   \item if $\ell$ and $k_0+m_s$ are on the same side of $m_s$, that is  $(\ell-m_s) k_0\geq0$, and  $
 |\sin\pi(2\theta+\alpha (2m_s+k_0))|\geq
    e^{-\eta |\ell-m_s|}$ for  some $\eta\in(0,\ln\lambda-\varepsilon)$,
 then
   \begin{equation}
 |\phi_{s}(\ell)|\leq
   e^{-(\ln\lambda -\eta-\varepsilon)|\ell-m_s|}
   |\phi_{s}(m_s)|.
 \end{equation}
 \end{itemize}

\end{theorem}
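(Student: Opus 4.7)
The plan is to combine the block resolvent (Poisson) expansion for the eigenfunction with Green's function estimates appropriate to the supercritical almost Mathieu operator. Fix $\ell$ with $\ell' = |\ell - m_s|$ large, and choose an interval $I = [a, b] \subset \Z$ containing $\ell$ but not $m_s$, with both $|\ell - a|$ and $|\ell - b|$ comparable to $\ell'$. The identity
\begin{equation*}
\phi_s(\ell) = -G_I(\ell, a)\phi_s(a-1) - G_I(\ell, b)\phi_s(b+1),
\end{equation*}
together with the maximality bound $|\phi_s(a-1)|, |\phi_s(b+1)| \leq |\phi_s(m_s)|$, reduces the claim to exponential decay of $|G_I(\ell, a)|$ and $|G_I(\ell, b)|$ at the asserted rates.

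To bound the Green's function I would use Cramer's rule
\begin{equation*}
G_I(x, y) = \frac{P_{[a, x-1]}(E)\, P_{[y+1, b]}(E)}{P_{[a, b]}(E)}, \qquad P_{[c, d]}(E) := \det(E - H_{[c, d]}),
\end{equation*}
combined with Herman's theorem $L(E) = \ln \lambda$ and upper semicontinuity of cocycle norms for Diophantine $\alpha$ and analytic potential; this gives the numerator bound $e^{(\ln\lambda + \varepsilon)(|I| - |x-y|)}$. For the almost Mathieu operator, the key structural input is that the minimum of $|P_{[a, b]}(E)|$ over $E$ in the spectrum is controlled, up to sub-exponential factors, by $\min_k |\sin\pi(2\theta + \alpha(a + b + k))|$ for $k$ in a range $\lesssim |I|$. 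Choosing $I$ so that $a + b \approx 2 m_s$ identifies these with the resonance quantities $|\sin\pi(2\theta + \alpha(2m_s + k))|$ featured in the theorem statement.

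Case (i) is then handled by placing $I$ so that the minimizer $m_s + k_0$, which sits on the opposite side of $m_s$ from $\ell$, falls outside the critical range of indices; the Diophantine lower bound on $\|\cdot\|_{\R/\Z}$ gives a sub-exponential lower bound on the denominator, hence the full decay rate $\ln\lambda - \varepsilon$. Case (ii) admits $m_s + k_0$ into the denominator, costing at most a factor $|\sin\pi(2\theta + \alpha(2m_s + k_0))|^{-1} \leq e^{\eta \ell'}$, which yields exactly $\ln\lambda - \eta - \varepsilon$. The main obstacle, standard in the AMO localization framework of \cite{jljems}, is carrying out a Diophantine pigeonhole argument ruling out secondary resonances of comparable or smaller $|\sin\pi(\cdot)|$ within any dyadic sub-interval of $I$, so that the single worst resonance identified by $k_0$ is genuinely the only source of denominator loss at every intermediate scale; this is the technical core where the Diophantine hypothesis on $\alpha$ is essentially used.
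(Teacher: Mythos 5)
This theorem is imported verbatim from \cite[Theorem 2.1]{jklmrl}; the paper under review gives no proof of it, so there is no internal argument to compare against. Judged on its own terms, your sketch correctly names the main tools used in \cite{jklmrl}: Poisson/block-resolvent expansion, Cramer's rule for $G_I$, a Herman/upper-semicontinuity numerator bound of the form $e^{(\ln\lambda+o(1))|I|}$, and a denominator bound governed by the $\sin\pi(2\theta+\alpha(\cdot))$ resonance structure of the Dirichlet determinants.

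Two substantive gaps remain. First, a single Poisson step with one interval cannot yield the sharp rate. To make both $|G_I(\ell,a)|$ and $|G_I(\ell,b)|$ of size $e^{-(\ln\lambda-\varepsilon)\ell'}$ you need both $|\ell-a|$ and $|\ell-b|$ to be at least about $\ell'$, hence $|I|\gtrsim 2\ell'$, which forces $I$ to (essentially) contain $m_s$ (and in case (ii) also $m_s+k_0$). But then $\phi_s$ restricted to $I$ is an approximate Dirichlet eigenvector at $E$, so $|P_{[a,b]}(E)|$ is itself exponentially small and the denominator lower bound fails; your stated requirement that $I$ not contain $m_s$ then caps the one-step gain at less than $\ell'$. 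The proof in \cite{jklmrl} instead iterates the block expansion through short ``regular'' intervals, terminating at the neighborhoods of $m_s$ and of $m_s+k_0$; the loss $\eta$ in case (ii) is exactly the cost of traversing the resonant neighborhood, and case (i) is where this cost is absent because the resonance lies on the other side. Second, the denominator input is not ``the minimum of $|P_{[a,b]}(E)|$ over $E$ in the spectrum'': $E$ is a fixed eigenvalue of $H_{\lambda,\alpha,\theta}$. The lower bound is a Lagrange-interpolation/Chebyshev estimate in the variable $z=\cos 2\pi\bigl(\theta+\tfrac{|I|-1}{2}\alpha\bigr)$, for which $P_{[a,b]}(\cdot,E)$ is a polynomial of degree $|I|$; the Diophantine hypothesis on $\alpha$ is used here to space the interpolation nodes and to isolate a single worst resonance, producing exactly the $\sin\pi(2\theta+\alpha(2m_s+k_0))$ factor in the statement. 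Your closing observation that the resonance pigeonhole is ``the technical core'' is accurate, but as written the sketch defers precisely the step at which cases (i) and (ii) actually diverge.
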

Denote by   $ U^{\varphi}(y) =\left(\begin{array}{c}
                             \varphi(y)\\
                            \varphi({y-1})
                          \end{array}\right).
                     $
The following Lemma establishes decay between the resonances. 
\begin{lemma}\cite[Lemma 3.4]{jljems}\label{Keylemmaapp}
%Let $\phi_s$ be the eigensolution $H\varphi=E\varphi$ with a global maximu $m_s$.
 % Let $\varphi$ be any solution of $H\varphi=E\varphi$ and define
%$r_{y}^{\varphi}=\max_{|\sigma|\leq 10 \gamma}|\varphi(y+\sigma k)|$.
Assume that  $k_0\in[-2C_1k,2C_1k]$  satisfies
 \begin{equation*}
  |\sin\pi(2\theta+2m_s\alpha+  k_0 \alpha)|=\min_{|x|\leq 2 C_1k}
  |\sin\pi(2\theta+2m_s\alpha+ x\alpha )|,
 \end{equation*}
 where $C_1\geq1$ is a   constant.
 Let $\gamma,\varepsilon$ be small positive constants.
%Let $r_{y}^{\varphi}=\max_{|\sigma|\leq 10\gamma}|\varphi(y+\sigma k)|$.
Let $y_1=m_s, y_2=m_s+k_0, y_3\in m_s+[-2C_1k,2C_1k]$.% and $|y_2|,|y_3|\geq k$.
Assume $y$ lies in  $[y_i,y_j]$ (i.e., $y\in [y_i,y_j]$)with  $|y_i-y_j|\geq k$ and $y_s\notin [y_i,y_j]$, $s\neq i,j$.
 Suppose     $|y_i|,|y_j|\leq C_1k$ and $|y-y_i|\geq 10\gamma k$, $|y-y_j|\geq 10\gamma k$.
  Then for large enough $k$,
  \begin{align}
     ||U^{\phi_s} (y)||\leq \max\{&||U^{\phi_s} (y_i)||\exp\{-(\ln \lambda- \varepsilon)(|y-y_i|-14\gamma k)\},\nonumber\\
   &||U^{\phi_s} (y_j)||\exp\{-(\ln\lambda- \varepsilon)(|y-y_j|-14\gamma k)\}\}. \label{second1}
  \end{align}

\end{lemma}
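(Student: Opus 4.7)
The plan is a block-resolvent (Poisson formula) argument on an interval $I=[a,b]\subset[y_i,y_j]$ that is arithmetically non-resonant. Without loss of generality take $y_i<y<y_j$ and choose $a=y_i+\lfloor 4\gamma k\rfloor$, $b=y_j-\lfloor 4\gamma k\rfloor$, so that $y\in I$ with $|y-a|,|y-b|\ge 6\gamma k$. The eigenvalue equation $H\phi_s=E\phi_s$ together with Dirichlet restriction to $I$ yields the Poisson identity
\begin{equation*}
\phi_s(y)=-G_I(y,a)\,\phi_s(a-1)-G_I(y,b)\,\phi_s(b+1),
\end{equation*}
where $G_I(y,z)=\langle\delta_y,(H_I-E)^{-1}\delta_z\rangle$, so the problem reduces to an exponential bound on $G_I(y,\cdot)$ at rate $\ln\lambda-\varepsilon$.

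The key input is arithmetic non-resonance of $I$. Since by hypothesis $y_s\notin[y_i,y_j]$ for $s\ne i,j$, in particular $m_s+k_0\notin I$. Combining the minimality property of $k_0$ over $|x|\le 2C_1 k$ with the Diophantine bound \eqref{Diokt} applied to the integer $x-k_0$ and the product-to-sum identity for sines, every $x$ with $m_s+x\in I$ satisfies
\begin{equation*}
\bigl|\sin\pi(2\theta+(2m_s+x)\alpha)\bigr|\;\ge\;c\,k^{-\kappa}
\end{equation*}
for large $k$. For the almost Mathieu cosine potential, this input — together with the uniform Lyapunov upper bound $\|A_{n,m}(E)\|\le e^{(\ln\lambda+\varepsilon)|n-m|}$ coming from upper semicontinuity of the Lyapunov exponent at $\ln\lambda$ — feeds into Herman's subharmonic argument applied to $\theta\mapsto\ln|\det(H_I(\theta)-E)|$ to yield the block-determinant lower bound
\begin{equation*}
|\det(H_I-E)|\;\ge\;e^{(\ln\lambda-\varepsilon)|I|}.
\end{equation*}
Cramer's rule then produces $|G_I(y,z)|\le e^{-(\ln\lambda-\varepsilon)|y-z|}$ for $z\in\{a,b\}$.

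Substituting these Green's function bounds into the Poisson identity and converting $|\phi_s(a-1)|,|\phi_s(b+1)|$ back to $\|U^{\phi_s}(y_i)\|,\|U^{\phi_s}(y_j)\|$ through the bounded-step transfer-matrix inequality \eqref{G.new17} over $\le 4\gamma k+1$ steps yields
\begin{equation*}
|\phi_s(y)|\le e^{-(\ln\lambda-\varepsilon)(|y-y_i|-O(\gamma k))}\|U^{\phi_s}(y_i)\|+e^{-(\ln\lambda-\varepsilon)(|y-y_j|-O(\gamma k))}\|U^{\phi_s}(y_j)\|.
\end{equation*}
Rerunning the argument at $y-1$ (whose distances to $y_i,y_j$ differ from those of $y$ by at most one) and taking the maximum of the two contributions absorbs all bounded additive losses into the stated $14\gamma k$ slack, yielding \eqref{second1}. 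The main obstacle is the block-determinant lower bound $|\det(H_I-E)|\ge e^{(\ln\lambda-\varepsilon)|I|}$: this is where the specific cosine structure of the AMO, the Diophantine hypothesis on $\alpha$, and the arithmetic separation of $I$ from the minimizer $k_0$ must be marshalled simultaneously via the Herman subharmonic machinery; the remaining steps are essentially linear algebra.
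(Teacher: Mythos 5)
This lemma is cited from \cite[Lemma~3.4]{jljems} and not reproved in the present paper, so there is no in-paper proof to compare against; your proposal is measured against the argument in that source. Your high-level framework — Poisson/block-resolvent identity on a subinterval $I\subset[y_i,y_j]$, Green's function decay via Cramer's rule, and the transfer-matrix conversion back to $\|U^{\phi_s}(y_i)\|$, $\|U^{\phi_s}(y_j)\|$ through \eqref{G.new17} — is the right one and matches that source, and you correctly read off the arithmetic content of the hypotheses (the non-resonance encoded by $y_s\notin[y_i,y_j]$ and the minimality of $k_0$, reduced to \eqref{Diokt} via the product-to-sum identity).

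The genuine gap is the pointwise block-determinant lower bound $|\det(H_I-E)|\ge e^{(\ln\lambda-\varepsilon)|I|}$, which you attribute to ``Herman's subharmonic machinery.'' Herman's argument only yields the \emph{averaged} bound $\int_0^1\ln|\det(H_{[0,k-1]}(\theta)-E)|\,d\theta\ge k\ln\lambda$; it places no lower bound on the determinant at the fixed $\theta$ actually in hand, so this crucial step is not established as written. The actual proof exploits the fact that, for the cosine potential, $\det(H_{[a,b]}(\theta)-E)$ is a polynomial of degree $|I|$ in $\cos2\pi(\theta+\tfrac{a+b}{2}\alpha)$; a Lagrange-interpolation/uniformity argument on the sampling points $\theta+j\alpha$, combined with the Furman-type uniform upper bound $\|A_{n,m}(E)\|\le e^{(\ln\lambda+\varepsilon)|n-m|}$, shows that \emph{most} candidate boxes are regular, and the arithmetic non-resonance with $k_0$ together with the Diophantine condition is exactly what lets one locate a regular one near the desired endpoints. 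This also exposes a second gap: you fix $a=y_i+\lfloor4\gamma k\rfloor$, $b=y_j-\lfloor4\gamma k\rfloor$ at the outset, but that particular box need not be regular; the argument must be free to choose $a,b$ within a $O(\gamma k)$ window, which is precisely where the $14\gamma k$ slack in \eqref{second1} originates. (Minor: the uniform exponential upper bound on the transfer matrices is Furman's uniform upper bound for uniquely ergodic cocycles, or continuity of the Lyapunov exponent for analytic $SL(2,\R)$ cocycles, not a consequence of upper semicontinuity alone.)
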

Lemma \ref{Keylemmaapp} immediately implies 
\begin{corollary}\label{keycor}
    Assume that  $k_0\in[-10C_1k,10C_1k]$  satisfies
 \begin{equation*}
  |\sin\pi(2\theta+k_0\alpha)|=\min_{|x|\leq 10 C_1k}
  |\sin\pi(2\theta+ x\alpha )|,
 \end{equation*}
 where $C_1\geq 3$ is a   constant. Assume $|m_s|\leq k$. Then
  for any $y$ with $|y|\leq C_1k$
   \begin{align}
     ||U^{\phi_s} (y)||\leq e^{\varepsilon k}\max\{&||U^{\phi_s} (m_s)||\exp\{-(\ln \lambda- \varepsilon)|y-m_s|\},\nonumber\\
   &||U^{\phi_s} (k_0-m_s)||\exp\{-(\ln\lambda- \varepsilon)|y-(k_0-m_s)|\}\}. \label{second}
  \end{align}
\end{corollary}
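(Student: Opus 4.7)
The plan is to deduce Corollary \ref{keycor} from Lemma \ref{Keylemmaapp} via a reparametrization and a short case analysis on the position of $y$. Set $\tilde k_0:=k_0-2m_s$, so that $m_s+\tilde k_0=k_0-m_s$ and $|\sin\pi(2\theta+2m_s\alpha+\tilde k_0\alpha)|=|\sin\pi(2\theta+k_0\alpha)|$. The hypothesis $|m_s|\le k$ together with $C_1\ge 3$ makes the corollary's window $[-10C_1 k,10C_1 k]$ comfortably larger than the lemma's $[-2C_1 k,2C_1 k]$ after the shift $x\mapsto x-2m_s$, so in the ``generic'' case $|\tilde k_0|\le 2C_1 k$ the shifted $\tilde k_0$ is exactly the lemma's minimizer $k_0^{\mathrm{lem}}$, and hence $m_s+k_0^{\mathrm{lem}}=k_0-m_s$. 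In the remaining case, $m_s+k_0^{\mathrm{lem}}$ differs from $k_0-m_s$, but since $m_s$ is the global maximum of $\phi_s$ we have $\|U^{\phi_s}(m_s+k_0^{\mathrm{lem}})\|\le \sqrt{2}\,\|U^{\phi_s}(m_s)\|$, so the corresponding term in the lemma's conclusion can be absorbed into the first entry of the corollary's max.

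With this identification, I would apply Lemma \ref{Keylemmaapp} with $y_1=m_s$, $y_2=m_s+k_0^{\mathrm{lem}}$, and a third point $y_3\in m_s+[-2C_1 k,2C_1 k]$ chosen depending on $y$ so that $|y_3|\le C_1 k$ and $y$ lies in some interval $[y_i,y_j]$ of length at least $k$ with the remaining point outside. A two-subcase split then covers all $|y|\le C_1 k$: (a) if $y$ lies between $y_1$ and $y_2$, take $(y_i,y_j)=(y_1,y_2)$ and the lemma directly outputs a two-term max of the form required; (b) if $y$ lies outside $[y_1,y_2]$, $(y_i,y_j)$ must include $y_3$, and the estimate $\|U^{\phi_s}(y_3)\|\le \sqrt 2\,\|U^{\phi_s}(m_s)\|$, again coming from the maximality of $m_s$, converts the lemma's output into the desired two-term max of the corollary.

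The remaining overheads all fit inside the $e^{\varepsilon k}$ prefactor: (i) the $\exp\{(\ln\lambda-\varepsilon)\cdot 14\gamma k\}$ factor from Lemma \ref{Keylemmaapp} is $\le e^{\varepsilon k/3}$ once $\gamma$ is chosen small in terms of $\varepsilon$ and $\ln\lambda$; (ii) the short-range windows $|y-y_i|<10\gamma k$ excluded by the lemma are handled by a direct one-step transfer-matrix estimate, $\|U^{\phi_s}(y)\|\le e^{O(\gamma k)}\|U^{\phi_s}(y_i)\|\le e^{\varepsilon k/3}\|U^{\phi_s}(y_i)\|$; (iii) the multiplicative $\sqrt 2$ from step (b) and from the non-generic case is absorbed similarly. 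The main---and quite mild---obstacle is the bookkeeping in the non-generic case, where one must verify that the ``wrong'' lemma-minimizer can indeed be replaced by the true resonance $k_0-m_s$ inside the max without loss; once this is checked, no ingredient beyond Lemma \ref{Keylemmaapp} and the maximality of $m_s$ is required.
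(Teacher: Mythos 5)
Your overall plan is the right one: rewrite the corollary's resonance point as $k_0-m_s=m_s+(k_0-2m_s)$, identify $\tilde k_0:=k_0-2m_s$ with the $k_0$ of Lemma~\ref{Keylemmaapp} when possible, run the three-point case analysis of the lemma, and absorb the $14\gamma k$ corrections and the short windows $|y-y_i|<10\gamma k$ into the prefactor $e^{\varepsilon k}$. This is indeed the content of the paper's one-line ``immediately implies.'' The issue is in your ``non-generic'' case $|\tilde k_0|>2C_1k$, where you claim the lemma's $y_2=m_s+k_0^{\mathrm{lem}}$ term can be absorbed into the corollary's $m_s$-term simply because $\|U^{\phi_s}(y_2)\|\le\sqrt2\,\|U^{\phi_s}(m_s)\|$ by maximality. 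That absorption does not follow: the exponential weights do not match. If $y$ sits near $y_2$ but far from $m_s$ (say $|y-y_2|\approx 10\gamma k$ while $|y-m_s|\approx |k_0^{\mathrm{lem}}|$ is of order $C_1k$), the lemma's $y_2$-term is of size $\|U^{\phi_s}(m_s)\|e^{O(\gamma k)}$, whereas the corollary's first entry is $e^{\varepsilon k}\|U^{\phi_s}(m_s)\|e^{-(\ln\lambda-\varepsilon)\,\Theta(C_1 k)}$, which is exponentially smaller. Meanwhile the corollary's second entry is even smaller, since $|k_0-2m_s|>2C_1k$ and $|m_s|\le k\le C_1k/3$ force $|y-(k_0-m_s)|>\tfrac23 C_1k$. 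So the $\sqrt 2$ bound is far from enough; as written this step is wrong.

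What actually saves the non-generic case is arithmetic, not maximality. Since both $k_0$ and $j:=2m_s+k_0^{\mathrm{lem}}$ lie in $[-10C_1k,10C_1k]$, if they are distinct then the Diophantine condition \eqref{Diokt} gives $\|(k_0-j)\alpha\|\ge \tau(20C_1k)^{-\kappa}$, and since $\|2\theta+k_0\alpha\|\le\|2\theta+j\alpha\|$ this yields $|\sin\pi(2\theta+2m_s\alpha+k_0^{\mathrm{lem}}\alpha)|\ge c\,k^{-\kappa}$. In other words, in the non-generic case the lemma's minimizer is not an exponential resonance at all: $|\sin\pi(2\theta+2m_s\alpha+x\alpha)|\ge c\,k^{-\kappa}\ge e^{-\eta|\ell-m_s|}$ for any $\eta>0$ once $|\ell-m_s|\gtrsim\gamma k$, so the second alternative of Theorem~\ref{Keytheorem} applies directly with $\eta$ arbitrarily small and gives $\|U^{\phi_s}(y)\|\le e^{\varepsilon k}\|U^{\phi_s}(m_s)\|e^{-(\ln\lambda-\varepsilon)|y-m_s|}$ throughout $|y|\le C_1k$, which is exactly the first entry of the corollary's max (and the second entry is then superfluous). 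This Diophantine input, rather than $\|U^{\phi_s}(y_2)\|\le\sqrt2\,\|U^{\phi_s}(m_s)\|$, is the missing ingredient. (A minor secondary point: in your sub-case (b) you need to choose $y_3$ at distance at least roughly $|y-m_s|$ from $y$; with $|y|\le C_1k$ and $|m_s|\le k$ this may push $|y_3|$ up to about $2C_1k+k$, so you must apply Lemma~\ref{Keylemmaapp} with a larger constant than $C_1$, which is harmless since the lemma allows any $C_1^{\mathrm{lem}}\ge1$ but should be said.)
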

In particular, it implies that eigenfunction decays exponentially with distance to the nearest resonance.

By Theorem \ref{thmal}, $H_{\lambda,\alpha,\theta}$ has a complete eigenbasis 
$\{\phi_s\}_{s=1}^{\infty}$.  By orthogonality and completeness we have
\begin{equation}\label{completebasis1}
   \sum_{n\in\Z} |\phi_s(n)|^2=1,
\end{equation}
and
\begin{equation}\label{completebasis2}
      \sum_{s\in\Z^+} |\phi_s(n)|^2=1.
\end{equation}

\begin{lemma}\label{keylemo51}
     For any $J\subset [-CL,CL]\cap \Z$, one has that
     \begin{equation}\label{go53}
         \sum_{s\in \Z^+, m_s\in J\atop{n\in [-CL,CL]}} |\phi_s(n)|^2\leq C\varepsilon L+ 2\# J.
     \end{equation}
 \end{lemma}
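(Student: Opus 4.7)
The plan is to combine the pointwise exponential decay bound of Corollary \ref{keycor} with the orthonormality identity $\sum_s |\phi_s(p)|^2 = 1$ at each lattice point. I would first apply Corollary \ref{keycor} with scale $k$ of order $L$, large enough that both $|m_s| \leq k$ and $|y| \leq C_1 k$ hold for every $m_s \in J \subset [-CL, CL]$ and every $y \in [-CL, CL]$. For each $s$ with $m_s \in J$ and each $n \in [-CL, CL]$, this gives
\begin{equation*}
|\phi_s(n)|^2 \leq e^{2\varepsilon k}\Bigl(\|U^{\phi_s}(m_s)\|^2 e^{-2(\ln\lambda-\varepsilon)|n-m_s|}+\|U^{\phi_s}(k_0-m_s)\|^2 e^{-2(\ln\lambda-\varepsilon)|n-(k_0-m_s)|}\Bigr),
\end{equation*}
where $k_0$ is the resonance index specified in the Corollary.

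Next I would sum this bound over $n \in [-CL, CL]$ via geometric series and then over $s$ with $m_s \in J$. The key observation is that orthonormality gives $\sum_s \|U^{\phi_s}(p)\|^2 = 2$ at every $p \in \Z$; organizing the sum over $s$ by the location of the anchor,
\begin{equation*}
\sum_{s:\, m_s \in J}\|U^{\phi_s}(m_s)\|^2 = \sum_{m \in J}\sum_{s:\, m_s = m}\|U^{\phi_s}(m)\|^2 \leq 2\#J,
\end{equation*}
and analogously for the $(k_0 - m_s)$-anchors after reindexing by positions in $k_0 - J$. Combining the three steps yields a preliminary bound of the form $C_\lambda e^{2\varepsilon k}\#J$.

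To arrive at the advertised additive form $C\varepsilon L + 2\#J$, two refinements are needed. First, partition $[-CL, CL]$ into two Voronoi-style cells, one closer to $J$ and one closer to $k_0 - J$, so that each $n$ contributes through its nearer anchor only; this avoids the factor of $2$ introduced by bounding the maximum by the sum and turns $4\#J$ into $2\#J$. Second, choose $k$ (or equivalently cover $[-CL, CL]$ by windows of smaller scale) so that the prefactor $e^{2\varepsilon k}$ becomes $1 + O(\varepsilon L)$, producing the additive error $C\varepsilon L$ rather than a multiplicative blowup of $\#J$.

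The main obstacle is the bookkeeping in the cell-partition step: it relies on the Diophantine condition on $\alpha$ to guarantee sufficient separation between $J$ and $k_0 - J$ so that the boundary between the two cells contributes only to the $C\varepsilon L$ error term. Control of the prefactor $e^{2\varepsilon k}$ is a separate but essentially routine quantitative matter, and no ingredients beyond Corollary \ref{keycor} and orthonormality should be required.
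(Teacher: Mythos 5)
Your proposal is built around a genuinely useful observation (the anchor orthonormality $\sum_{s:m_s=m}\|U^{\phi_s}(m)\|^2 \leq 2$), but the way you combine it with Corollary~\ref{keycor} cannot produce the sharp constant $2$ in front of $\#J$. When you ``sum the bound over $n$ via geometric series,'' you are estimating
\begin{equation*}
\sum_{n}|\phi_s(n)|^2 \lesssim e^{2\varepsilon k}\,C_\lambda\Bigl(\|U^{\phi_s}(m_s)\|^2+\|U^{\phi_s}(k_0-m_s)\|^2\Bigr),\qquad C_\lambda=\sum_{n\in\Z}e^{-2(\ln\lambda-\varepsilon)|n|},
\end{equation*}
and $C_\lambda$ is a fixed multiplicative constant strictly bigger than $1$ (and large when $\lambda$ is close to $1$). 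Your two refinements address different losses: the Voronoi split removes the doubling from bounding a max by a sum, and shrinking $k$ controls $e^{2\varepsilon k}$, but neither touches $C_\lambda$. After both refinements you would still land on a bound of the form $2C_\lambda\#J + C\varepsilon L$, which is weaker than the claimed $2\#J + C\varepsilon L$ whenever $\#J\gg\varepsilon L$. The $\ell^2$ identity $\sum_n|\phi_s(n)|^2=1$ makes the waste visible: for a single $s$ your geometric-series estimate returns roughly $4C_\lambda$, so the method overcounts by that factor per eigenfunction.

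The paper's proof goes in the opposite order and avoids any geometric series. Let $J_1$ be the $C\varepsilon L$-neighborhood of $J$ and let $L_0$ realize $\min_{|x|\leq CL}|\sin\pi(2\theta+x\alpha)|$. For $n\in[-CL,CL]\setminus\bigl(J_1\cup(L_0-J_1)\bigr)$ and $m_s\in J$, Corollary~\ref{keycor} gives $|\phi_s(n)|\leq e^{-\varepsilon L}$, so that piece of the double sum is $O(e^{-\varepsilon L})$. For $n\in J_1\cup(L_0-J_1)$, one simply \emph{drops the constraint} $m_s\in J$ and uses completeness $\sum_{s}|\phi_s(n)|^2=1$ pointwise in $n$, giving exactly $\#\bigl(J_1\cup(L_0-J_1)\bigr)\leq 2\#J+C\varepsilon L$. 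So the sharp constant $2$ comes from there being two resonant copies ($J$ and $L_0-J$), not from anchor orthonormality; and the only input beyond Corollary~\ref{keycor} is the completeness relation, used in $n$ rather than at the anchors. Incidentally, the separation between $J$ and $L_0-J$ that you worry about is not needed at all: overlap only helps, since one bounds the size of the union by the sum of the sizes. The real point you are missing is the near/far split in $n$ that lets you trade exponential decay away from the resonant windows for the clean orthonormality identity inside them.
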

 \begin{proof}
     Let $J_1$ be $C\varepsilon L$ neighborhood of  $J$.
          Let
  $L_0$   be such that
             \begin{equation*}%\label{x0}
            |\sin\pi(2\theta+L_0\alpha)|  = \min_{|x|\leq CL}|\sin\pi(2\theta+x\alpha)|.
             \end{equation*}

By Corollary \ref{keycor}, one has that for any $n$ with $n\in [-CL,CL]\backslash(J_1\cup (L_0-J_1))$  and $m_s\in J$, 
\begin{equation}\label{bgo51}
    |\phi_s(n)|\leq e^{-\varepsilon L}. 
\end{equation}
By \eqref{bgo51} and \eqref{completebasis2}, one has that
\begin{eqnarray*}
% \nonumber to remove numbering (before each equation)
 \sum_{s\in \Z^+, m_s\in J\atop{n\in [-CL,CL]}} |\phi_s(n)|^2 &\leq &  e^{-\varepsilon L}+\sum_{s\in \Z^+, m_s\in J \atop {n\in J_1\cup (L_0-J_1)}}|\phi_s(n)|^2 \\
    &\leq&  e^{-\varepsilon L}+\sum_{s\in \Z^+  \atop {n\in J_1\cup (L_0-J_1)}}|\phi_s(n)|^2  \\
     &\leq&C\varepsilon L+ 2\# J.
\end{eqnarray*}

 \end{proof}
 
\begin{proof}[\bf Proof of the upper bounds of Theorem  \ref{thmden}]
	
	% Fix an eigenfunction $\phi_n$  and let $m_n$ be a global maximum. 
 Assume $|m_s|\leq L$, $L$ sufficiently large. By \eqref{Defdelta}, one has for any $|n|\geq CL,$ 
             \begin{eqnarray}
             % \nonumber to remove numbering (before each equation)
             ||2\theta+2m_s\alpha+n\alpha||_{\R/\Z} &\geq& e^{-(\delta+\varepsilon)|n+2m_s|} \\
             &\geq&  e^{-\eta |n-m_s|}
             \end{eqnarray}
             for some  $ \eta  $ with $\delta<\eta<\ln\lambda$.
             By Theorem \ref{Keytheorem}, we have  for any $n$ with $|n|\geq CL$,
             \begin{equation}\label{newdecay}
             |\phi_s(n)|\leq  e^{-(\ln\lambda-\eta-\varepsilon)|n|}.
             \end{equation}
             We will now estimate $|\phi_s(n)|$ for $(1+\varepsilon) L\leq |n|\leq CL$.
 \begin{lemma} \label{35}For $L>0$ large enough and $|m_s|\leq L$
      there exists a subset $J\subset \Z$ with
                \begin{equation}\label{10}
               \#J\leq \frac{\delta}{\ln \lambda} L+C\varepsilon L,
               \end{equation}
               such that  
               for any $n$ with $n\notin J$  and $(1+\varepsilon) L\leq |n|\leq CL$, 
               \begin{equation}\label{11}
               |\phi_s(n)|\leq e^{-\varepsilon L}.
               \end{equation}
               
 \end{lemma}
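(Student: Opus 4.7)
My plan is to apply Corollary~\ref{keycor} at scale $k=L$, which is valid since $|m_s|\le L$, with $C_1 = C \ge 3$. Let $L_0 \in [-10CL,10CL]$ achieve $\min_{|x|\le 10CL}|\sin\pi(2\theta+x\alpha)|$. Then Corollary~\ref{keycor}, applied with its small parameter taken equal to the lemma's $\varepsilon$, yields for every $n$ with $|n|\le CL$
\begin{equation*}
|\phi_s(n)| \le e^{\varepsilon L}\,\max\left\{|\phi_s(m_s)|\,e^{-(\ln\lambda-\varepsilon)|n-m_s|},\ \|U^{\phi_s}(L_0-m_s)\|\,e^{-(\ln\lambda-\varepsilon)|n-(L_0-m_s)|}\right\}.
\end{equation*}

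Because $\phi_s$ is $\ell^2$-normalized, $|\phi_s(m_s)|\le 1$ and $\|U^{\phi_s}(L_0-m_s)\|\le\sqrt{2}$. A short calculation shows that each argument of the max is bounded by $e^{-2\varepsilon L}$ whenever $n$ lies at distance at least $3\varepsilon L/(\ln\lambda-\varepsilon)$ (valid once $L$ is large) from the corresponding center $m_s$ or $L_0-m_s$. I therefore set
\begin{equation*}
J = \{k\in\Z : |k-m_s|<3\varepsilon L/(\ln\lambda-\varepsilon)\}\ \cup\ \{k\in\Z : |k-(L_0-m_s)|<3\varepsilon L/(\ln\lambda-\varepsilon)\}.
\end{equation*}
Its cardinality is at most $12\varepsilon L/(\ln\lambda-\varepsilon)+2 \le \delta L/\ln\lambda + C\varepsilon L$ for $C$ sufficiently large. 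For $n\notin J$ lying in $(1+\varepsilon)L\le|n|\le CL$, both max-arguments are at most $e^{-2\varepsilon L}$, so $|\phi_s(n)|\le e^{\varepsilon L}\cdot e^{-2\varepsilon L}=e^{-\varepsilon L}$, as required.

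The main obstacle is only administrative: coordinating the small parameters (the lemma's $\varepsilon$, the Corollary's $\varepsilon$, and the implicit $\varepsilon$ in the Diophantine bound~\eqref{Defdelta}) so that $\ln\lambda-\varepsilon>0$ and the outer factor $e^{\varepsilon L}$ is absorbed by the exponential decay. The borderline case $|L_0-m_s|>CL$, in which $L_0-m_s$ exits the window $[-CL,CL]$, is handled automatically: the second term in the max is already exponentially small throughout $|n|\le CL$, so the bad set reduces to the neighborhood of $m_s$ alone. The $\delta L/\ln\lambda$ piece of the stated bound is not needed for this particular argument but is a convenient form for the subsequent density computations for $d_\pm$ in Theorem~\ref{thmden}.
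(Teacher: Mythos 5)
Your proof establishes a formally correct statement, but it proves a strictly weaker lemma than the one the paper actually needs, and the gap is precisely the point of the argument.

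The crucial issue is that your exceptional set $J$ depends on $m_s$ through the center $L_0-m_s$, so each eigenfunction $\phi_s$ gets its own $J=J_s$. Look at how Lemma~\ref{35} is consumed in the proof of the upper bound of Theorem~\ref{thmden}: one sets $S=[-(1+\varepsilon)L,(1+\varepsilon)L]\cup J$, then uses $\sum_{s,n\in S}|\phi_s(n)|^2=\#S$ from \eqref{completebasis2}, and then lower-bounds $\sum_{n\in S}|\phi_s(n)|^2\ge 1-e^{-\varepsilon L}$ simultaneously for every $s$ with $|m_s|\le L$. This forces $S$ (hence $J$) to be a single fixed set working for all such $s$ at once. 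Your construction cannot be made uniform: as $m_s$ ranges over $[-L,L]$, the reflected point $L_0-m_s$ sweeps an interval of length about $2L$, so the union $\bigcup_s J_s$ has size comparable to $L$, far exceeding $\frac{\delta}{\ln\lambda}L+C\varepsilon L$ whenever $\delta<\ln\lambda$.

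The missing idea, and the reason $\frac{\delta}{\ln\lambda}L$ appears, is that one must not bound $\|U^{\phi_s}(L_0-m_s)\|$ crudely by $\sqrt{2}$ for every $m_s$. The paper instead combines the arithmetic lower bound \eqref{L0}, namely $\|2\theta+L_0\alpha\|_{\R/\Z}\ge e^{-(\delta+\varepsilon)L_0}$, with the second alternative of Theorem~\ref{Keytheorem}: whenever $|\sin\pi(2\theta+L_0\alpha)|\ge e^{-\eta|L_0-2m_s|}$ with $\eta<\ln\lambda$, the eigenfunction still decays at $L_0-m_s$ and no resonant peak forms. Solving $(\delta+\varepsilon)L_0\le(\ln\lambda-C\varepsilon)|L_0-2m_s|$ shows the only ``dangerous'' $m_s$ are those with $|L_0-2m_s|\lesssim\frac{\delta}{\ln\lambda}L_0$, i.e.\ $m_s$ in an interval $I$ of length $\sim\frac{\delta}{\ln\lambda}L$; the uniform $J$ is then (a fattening of) $L_0-I$, which depends only on $L$, $L_0$ and the parameters, not on $s$. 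Your closing remark that the $\frac{\delta}{\ln\lambda}L$ term ``is not needed for this particular argument but is a convenient form'' is exactly backwards: that term is the precise size of the resonant window and is the entire content of the lemma, and discarding it is what breaks the subsequent counting.
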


 {\bf Proof}.

We split the estimate of  $\phi_s(n)$  into several cases.
 Let
  $L_0$ (we can choose any one if $L_0$ is not unique) be such that
             \begin{equation*}%\label{x0}
            |\sin\pi(2\theta+L_0\alpha)|  = \min_{|x|\leq CL}|\sin\pi(2\theta+x\alpha)|.
             \end{equation*}
             We then have that $L_0-m_s$ is the resonance, so in estimating $\phi_s(n)$ we need to worry about the distances from $n$ to $m_s$ and $L_0-m_s.$
    Without loss of generality, assume   $L_0\geq 0$. By \eqref{Defdelta},
one has that
\begin{equation}\label{L0}
||2\theta+L_0\alpha||_{\R/\Z} \geq e^{-(\delta+\varepsilon) L_0}.
\end{equation}   
\\
%{\color{blue} For Lana: $L_0-m_s$ is the resonance, $m_s$ is the maximum}\\
             
     {\bf Case 1:}        $|m_s-L_0|\leq L$
     
      For any $n$ with $(1+\varepsilon)L\leq |n|\leq CL$  we have that $|n-(m_s+k_0)|>\varepsilon L,$ and therefore, applying Corollary \ref{keycor},
     we have
     \begin{equation}\label{g1}
     |\phi_s(n)|\leq   e^{-\varepsilon L}.
     \end{equation}

        {\bf Case 2:}      $|m_s-L_0|> L$ 
        and $m_s\leq \frac{1}{2}(1-\frac{\delta}{\ln\lambda})L_0-C\varepsilon L$
     
      In this case, we will show that  for any $n$ with $(1+\varepsilon)L\leq |n|\leq CL$,
     \begin{equation}\label{g3}
     |\phi_s(n)|\leq  e^{-\varepsilon L}.
     \end{equation}
     
  {\bf Case $2_1$:}    $|n-(L_0-m_s)|\geq C\varepsilon L$
  
  In this case,  applying  Corollary \ref{keycor},  we have
  
  \begin{equation}\label{g4}
  |\phi_s(n)|\leq e^{-\varepsilon L}.
  \end{equation}
  
  {\bf Case $2_2$:}    $|n-(L_0-m_s)|\leq C\varepsilon L$

   By \eqref{L0} and the fact that  $m_s\leq \frac{1}{2}(1-\frac{\delta}{\ln\lambda})L_0-C\varepsilon L$,
    we have
     \begin{equation}\label{g2}
   ||2\theta +L_0\alpha||_{\R/\Z}\geq e^{ -(\delta+\varepsilon)L_0}\geq e^{-(\ln\lambda-C\varepsilon) |n-m_s|}.
     \end{equation}
     
     Applying the second case of Theorem \ref{Keytheorem} with $k_0=L_0-2m_s$, $\ell=n$, one has that for any $n$ with $|n-(L_0-m_s)|\leq C\varepsilon L$,
       \begin{equation}\label{g6}
     |\phi_s(n)|\leq e^{-\varepsilon L}.
     \end{equation}
     
     \eqref{g3} follows from \eqref{g4} and \eqref{g6} immediately.
     \\
            {\bf Case 3:}      $|m_s-L_0|> L$ and $m_s> \frac{1}{2} (1-\frac{\delta}{\ln\lambda})L_0-C\varepsilon L$

             In this case, by the assumption that $L_0\geq 0 $, $|m_s|\leq L$  and $|m_s-L_0|> L$,  one has that
              \begin{equation*}
              m_s\leq L_0-L.
              \end{equation*}

              In this case we will need to exclude from our estimates the potential resonant points. Those form an interval that we will call $J$ or $J'$. 
            
           \par
                     {\bf Case $3_1$:}  $L_0\leq 2L$
              
                If $\frac{1}{2}(1-\frac{\delta}{\ln\lambda}-C\varepsilon)L_0> L_0-L$, let $J=\emptyset$.
                If $\frac{1}{2}(1-\frac{\delta}{\ln\lambda}-C\varepsilon)L_0\leq L_0-L$,
               let $I=[\frac{1}{2}(1-\frac{\delta}{\ln\lambda})L_0 , L_0-L]\subset [0,L]$, the set of values of $m_s$ that can lead to resonances,  and  $J_1=L_0-I$. 
               Let $J\subset \Z$ be the $C\varepsilon L$ neighbourhood of $J_1$.
                 Applying  Corollary \ref{keycor}  again, one has that
                 for any $n$ with $n\notin J$  and $|n|\geq (1+\varepsilon) L$, 
                 \begin{equation}\label{g7}
               |\phi_s(n)|\leq e^{-\varepsilon L}.
               \end{equation}
               
                 {\bf Case $3_2$:}  $L_0>2L$

               If $\frac{1}{2}(1-\frac{\delta}{\ln\lambda}-C\varepsilon)L_0>L$, let $ J'=\emptyset$. If $\frac{1}{2}(1-\frac{\delta}{\ln\lambda}-C\varepsilon)L_0\leq L$,
               let $I^\prime=[\frac{1}{2}(1-\frac{\delta}{\ln\lambda})L_0 , L]\subset [0,L]$,
               and $J_1^\prime=L_0-I^\prime$.  Let $J^\prime\subset \Z$ be the $C\varepsilon L$ neighbourhood of $J_1^\prime$.
               
               Applying  Corollary \ref{keycor}  again, one has
               for any $n$ with $n\notin J^\prime$  and $|n|\geq (1+\varepsilon)L$, 
               \begin{equation}\label{g8}
               |\phi_s(n)|\leq e^{-\varepsilon L}.
               \end{equation}
               
               It is easy to check that 
               \begin{equation}\label{g9}
               \#J\leq \frac{\delta}{\ln \lambda} L+C\varepsilon L,   \# J^\prime\leq \frac{\delta}{\ln \lambda} L+C\varepsilon L.
               \end{equation}
               
               %Putting all cases together, we prove that  there exists a subset $J\subset \Z$ such that 
                %\begin{equation}\label{10}
               %\#J\leq \frac{\delta}{\ln \lambda} L+C\varepsilon L,
               %\end{equation}
               %and   
              % for any $n$ with $n\notin J$  and $(1+\varepsilon) %L\leq |n|\leq C L$, 
              % \begin{equation}\label{11}
              % |\phi_s(n)|\leq e^{-\varepsilon L}.
              % \end{equation}
               \qed
              % Similarly, the arguments also work for $L_0<0$.

%and

%\begin{equation}\label{Gtwosides}
  %  |\phi_n(m)|\leq  \sup_{|k-L_0|\leq \varepsilon L}|\phi_n(k)|,
%\end{equation}
%if $|m-L_0|\leq \varepsilon L$.

%For $|k|> C  L$, by  (\ref{smalluniform}), we have
%\begin{eqnarray}
%% \nonumber to remove numbering (before each equation)
%  ||2(\theta+m_n\alpha)+k\alpha||_{\R/\Z} &\geq &  \sigma e^{-(\delta+\sigma)|k+2m_n|} \label{newsmall1}\\
%   &\geq& \sigma e^{-(\delta+\sigma_1)|k|} \label{newsmall},
%\end{eqnarray}
% where  $\delta+\sigma_1<\ln|\lambda|$( by letting $C$ is large enough).  %is  much  than $\sigma$ in (\ref{newsmall1}). We do not want to use too many notations so just keep the same.

%where $\sigma $ is a proper constant such that  $\delta+\sigma<\ln|\lambda|$ (by letting $C$ large).

Let $S=[-(1+\varepsilon)L, (1+\varepsilon)L]\cup J$.
By  Lemma \ref{35}, \eqref{newdecay},  and \eqref{completebasis1}, one has under the assumption that $|m_s|\leq L$, 
\begin{equation}\label{g15}
\sum_{n\in S} |\phi_s(n)|^2\geq 1-e^{-\varepsilon L}.
\end{equation}
By Lemma \ref{35},  (\ref{completebasis2}) and \eqref{g15}, one has
\begin{eqnarray*}
% \nonumber to remove numbering (before each equation)
  2L+\frac{\delta}{\ln \lambda} L +C\varepsilon L &\geq &  \sum_{s\in\Z^{+},n\in S}|\phi_s(n)|^2 \\
    &\geq&  \sum_{s\in\Z^+,|m_s|\leq L,n\in S} |\phi_s(n)|^2  \\
    &\geq&  \sum_{s\in\Z^+,|m_s|\leq L} (1-e^{-\varepsilon L} )\\
     &\geq&(1-e^{-\varepsilon L} )\#\{s:|m_s|\leq L\}.
\end{eqnarray*}
 This implies
 \begin{equation}\label{g101}
    \limsup (2L+1)^{-1}\#\{s,|m_s|\leq L\}\leq 1+\frac{\delta}{2\ln \lambda}.
 \end{equation}
 \end{proof}
 
\begin{proof}[\bf Proof of the lower bounds of Theorem  \ref{thmden}]

 Assume $|m_s|\geq (1+\varepsilon)L$ and $ |n|\leq L$.

 {\bf Case 1:} $|m_s|\geq C L$.
 
 We will prove that there exists some $\eta>0$ such that for  all $n$ with $|n|\leq L$,
 \begin{equation}\label{g18}
 |\phi_s(n)|\leq e^{- \eta  |m_s|} . 
 \end{equation}
  Without loss of generality, assume $m_s\geq CL$.
  
  Let $L_0\in[-10 |m_s|,10|m_s|]$  be such that
 \begin{equation}\label{g19}
 |\sin\pi(2\theta+L_0\alpha) |=\min_{|x|\leq 10|m_s|}
 |\sin\pi(2\theta+x\alpha)|.
 \end{equation}

 If $L_0-2m_s\geq 0$ (namely $L_0-m_s\geq m_s$), applying     Corollary \ref{keycor}, one has that
 for any $n$ with $|n|\leq L$,
 \begin{equation*}
 |\phi_s(n)|\leq e^{-(\ln\lambda-\varepsilon)|n-m_s|}.
 \end{equation*}
 This implies \eqref{g18}.
 
Therefore, it suffices to consider the case $L_0-2m_s<0$.

{\bf Case $1_1$:} $|m_s|\geq C_0 L$ and  $|L_0-m_s|\geq  \frac{10}{C_0}m_s$.

Applying  Corollary \ref{keycor}, one has
for any $n$ with $|n|\leq L$,
\begin{equation}
|\phi_s(n)|\leq e^{-(\ln\lambda-\varepsilon)|n-(L_0-m_s)|} + e^{-(\ln\lambda-\varepsilon)|n-m_s|}.
\end{equation}
This implies \eqref{g18}.

{\bf Case $1_2$:} $|m_s|\geq C_0 L$ and  $|L_0-m_s|\leq\frac{10}{{C_0}}m_s$.

By \eqref{Defdelta} and \eqref{g19}, one has for some $\eta$ with $\delta<\eta<\ln 
\lambda$, 
 \begin{equation}\label{g20}
|\sin\pi(2\theta+L_0\alpha )| \geq e^{-\eta |n-m_s|},
\end{equation}
for all $n$  with $|n|\leq L$.

Applying Theorem \ref{Keytheorem}  with \eqref{g20}, we have for any $n$ with $|n|\leq L$,
\begin{equation}
|\phi_s(n)|\leq  e^{-(\ln\lambda-\eta-\varepsilon)|n-m_s|}.
\end{equation}
Since $|m_s|>(1+\varepsilon)L$, this implies \eqref{g18}.

 {\bf Case 2:} $(1+\varepsilon)L\leq |m_s|<C L$.
 
  Let
 $L_0$ (we can choose any one if $L_0$ is not unique) be such that
 \begin{equation*}%\label{x0}
 |\sin\pi(2\theta+L_0\alpha)|  = \min_{|x|\leq CL}|\sin\pi(2\theta+x\alpha)|.
 \end{equation*}
 
 {\bf Case $2_1$:}   $|m_s-L_0|\geq (1+\varepsilon)L$. Then $|n-m_s|>\varepsilon L$ and $|n-(m_s-L_0)|>\varepsilon L$.
 Applying  Corollary \ref{keycor}, 
 we have that for any $n$ with $ |n|\leq L$,
 \begin{equation}\label{g21}
 |\phi_s(n)|\leq  e^{-\varepsilon L}.
 \end{equation}
 
{\bf Case $2_2$:}   $|m_s-L_0|< (1+\varepsilon)L$.

We are going to prove that 
there exists a subset $J\subset [-CL,CL] \backslash [-L-\varepsilon L,L+\varepsilon L]$ such that 
\begin{equation}\label{12new}
\#J\leq \frac{\delta}{\ln \lambda} L+C\varepsilon L,
\end{equation}
and   
for any $m_s$ with $m_s\notin J$  and $   |n|\leq  L$, 
\begin{equation}\label{13new}
|\phi_s(n)|\leq e^{-\varepsilon L}.
\end{equation}

Without  loss of generality, assume $L_0\geq 0$. 
Then we must have $m_s\geq (1+\varepsilon)L$.
By   Corollary \ref{keycor} and Theorem \ref{Keytheorem} ,  we only need to show
there exists an appropriate size subset $J\subset [-CL,CL] \backslash [-L-\varepsilon L,L+\varepsilon L]$ such that  for $m_s\notin J$,
\begin{equation}\label{12newn}
|\sin\pi(2\theta+L_0\alpha )|\geq  e^{-( \ln \lambda- C\varepsilon)|L_0-2 m_s|}.
\end{equation}
By  \eqref{L0} this would follow from  
\begin{equation}\label{12newnew}e^{-(\delta +\varepsilon)L_0} \geq  e^{-( \ln \lambda- C\varepsilon)|L_0-2 m_s|}.
\end{equation}
 
In this case, $L_0-2m_s=L_0-m_s-m_s<0.$
 The estimate in \eqref{12newnew} becomes
\begin{equation}\label{12newnew1}
 (\delta +\varepsilon)L_0 \leq    ( \ln \lambda- C\varepsilon)(2m_s-L_0).
\end{equation}

{\bf Case $2_{21}$:} $L_0\geq 2L$

If   $L_0-L>\frac{\delta }{2 \ln\lambda} L_0+\frac{L_0}{2}
+C\varepsilon L$, let  $J=\emptyset$.
If $L_0-L\leq \frac{\delta }{2 \ln\lambda} L_0+\frac{L_0}{2}
+C\varepsilon L$, let $J$ be $\varepsilon L$ neighborhood of 
 $[L_0-L, \frac{\delta }{2 \ln\lambda} L_0+\frac{L_0}{2}]$.

 {\bf Case $2_{22}$:} $L_0\leq 2L$ 

If   $L>\frac{\delta }{2 \ln\lambda} L_0+\frac{L_0}{2}
+C\varepsilon L$, let  $J=\emptyset$.
If $L\leq \frac{\delta }{2 \ln\lambda} L_0+\frac{L_0}{2}
+C\varepsilon L$, let $J$ be $C\varepsilon L$ neighbor of 
 $[L, \frac{\delta }{2 \ln\lambda} L_0+\frac{L_0}{2}]$.

 Direct computations show that \eqref{12newnew1} and \eqref{12new} hold.
 
%for any $n$ with
% $ |n-(L_0-m_s)|\geq \varepsilon L $,
% one has that 
%\begin{equation}\label{newg21}
% |\phi_s(n)|\leq   e^{-\varepsilon L}.
% \end{equation}
% So we  only to consider $n$ with
% $ |n-(L_0-m_s)|\leq \varepsilon L $.

 Putting all cases together and by \eqref{g101}, we have
 \begin{align*}
\sum_{|n |\leq L, m_s\notin J \atop^{s\in\Z^+,|m_s|\geq(1+\varepsilon)L}} |\phi_s(n)|^2\ &=  \sum_{|n |\leq L, m_s\notin J \atop^{s\in\Z^+, |m_s|\geq C L}} |\phi_s(n)|^2+\sum_{|n|\leq L, m_s\notin J \atop^{s\in\Z^+,(1+\varepsilon)L\leq |m_s|\leq CL}} |\phi_s(n)|^2  \\
&\leq e^{-\varepsilon L} .
 \end{align*}
 %where  we use the upper bound in the last inequality.  

By \eqref{completebasis1}, (\ref{completebasis2}), \eqref{g101},\eqref{12newnew1} and \eqref{go53}, we have that
\begin{eqnarray*}
% \nonumber to remove numbering (before each equation)
  2 L-2\frac{\delta}{\ln \lambda} L-C\varepsilon L&\leq &  \sum_{s\in\Z^{+},|n|\leq  L,m_s\notin J}|\phi_s(n)|^2 \\
    &=&  \sum_{|n|\leq L,m_s\notin J\atop^{s\in \Z^+,|m_s|\leq (1+\varepsilon)L}} |\phi_s(n)|^2 + \sum_{|n |\leq L,m_s\notin J\atop{s\in\Z^+,m_s|>(1+\varepsilon)L}} |\phi_s(n)|^2\\
    &\leq&  \sum_{|n |\leq L\atop{s\in\Z^+,|m_s|\leq (1+\varepsilon)L}} |\phi_s(n)|^2+e^{-\varepsilon L} \\
     &\leq&\#\{s:|m_s|\leq (1+\varepsilon)L\}+e^{-\varepsilon L}.
\end{eqnarray*}
Thus the following lower bound holds
 \begin{equation*}
    \liminf (2L+1)^{-1}\#\{s:|m_s|\leq L\}\geq 1 -\frac{\delta}{\ln \lambda}.
 \end{equation*}
\end{proof}

%\section{appendix}
\section{Proof of part (2) of Theorem \ref{amcrit}}\label{secamcrit}

\begin{theorem}\label{edl}
	There exist constants $C(\lambda,\delta)$, $C_1(\lambda,\delta)$ and $C_2(\lambda,\delta)$ such that for large enough $m_s$,
	\begin{equation}\label{gedl}
	|\phi_s(n)\phi_s(\ell)|\leq Ce^{-C_1|n-\ell|+C_2|n|}.
	\end{equation}
\end{theorem}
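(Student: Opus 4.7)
The plan is to combine the pointwise eigenfunction bounds of Theorem \ref{Keytheorem} and Corollary \ref{keycor} with the small-divisor estimate \eqref{Defdelta} to control $|\phi_s(n)\phi_s(\ell)|$ directly. Fix a normalized $\phi_s$ with $|m_s|$ large, set $k=\max(|m_s|,|n|,|\ell|)$, and let $L_0\in[-10C_1k,10C_1k]$ be a minimizer of $|\sin\pi(2\theta+L_0\alpha)|$ on this scale. The key structural fact is that $\phi_s$ has a single global maximum at $m_s$ with $|\phi_s(m_s)|\leq 1$, and a potential secondary peak at $L_0-m_s$ whose height is controlled by the small divisor.

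The resonant case of Theorem \ref{Keytheorem}, combined with the lower bound $|\sin\pi(2\theta+L_0\alpha)|\geq e^{-(\delta+\varepsilon)|L_0|}$ from \eqref{Defdelta}, yields
\[
|\phi_s(L_0-m_s)| \leq e^{-(\ln\lambda-\delta-C\varepsilon)|L_0-2m_s|}\,|\phi_s(m_s)|,
\]
making the secondary peak exponentially smaller than the primary one since $\delta<\ln\lambda$. Corollary \ref{keycor} then gives the global bound
\[
|\phi_s(y)| \leq e^{\varepsilon k}\max\bigl(|\phi_s(m_s)|e^{-(\ln\lambda-\varepsilon)|y-m_s|},\; |\phi_s(L_0-m_s)|e^{-(\ln\lambda-\varepsilon)|y-(L_0-m_s)|}\bigr)
\]
applied at $y=n$ and $y=\ell$.

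Multiplying these two bounds and substituting the secondary-peak estimate produces four summands of the form
\[
e^{2\varepsilon k}\cdot e^{-(\ln\lambda-\varepsilon)(|n-p|+|\ell-q|)}\cdot e^{-(\ln\lambda-\delta-C\varepsilon)(|p-m_s|+|q-m_s|)},
\]
with $p,q\in\{m_s,L_0-m_s\}$. When $p=q$, the triangle inequality $|n-p|+|\ell-p|\geq|n-\ell|$ gives $e^{-C_1|n-\ell|}$ with $C_1=\ln\lambda-\varepsilon$. When $p\neq q$, the inequality $|n-p|+|\ell-q|\geq|n-\ell|-|L_0-2m_s|$ costs a factor $e^{(\ln\lambda-\varepsilon)|L_0-2m_s|}$, which is more than compensated by the secondary-peak factor $e^{-(\ln\lambda-\delta-C\varepsilon)|L_0-2m_s|}$ appearing in the same summand. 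In every case one arrives at
\[
|\phi_s(n)\phi_s(\ell)| \leq e^{2\varepsilon k}\,e^{-C_1|n-\ell|}.
\]

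The theorem then follows by absorbing $e^{2\varepsilon k}$: using $k\leq|m_s|+|n|+|\ell|\leq|m_s|+2|n|+|n-\ell|$, one chooses $\varepsilon$ small enough (depending only on $\lambda,\delta$) so that the $2\varepsilon|n-\ell|$ contribution is absorbed into a slightly smaller $C_1$, and the remaining $e^{2\varepsilon(|m_s|+2|n|)}$ is handled by distinguishing $|n|\geq|m_s|/2$ (where it is absorbed into $e^{C_2|n|}$ with $C_2$ proportional to $\varepsilon$) from $|n|<|m_s|/2$ (where Theorem \ref{Keytheorem} already forces $|\phi_s(n)|\leq e^{-(\ln\lambda-\delta-C\varepsilon)|m_s|/2}$, so the product is trivially much smaller than required). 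The main obstacle is the precise case analysis near the resonance $L_0-m_s$, where one must carefully track the cancellation between the small-divisor slack from Corollary \ref{keycor} and the secondary-peak decay, uniformly over all $s$ and over all positions of $n,\ell$ relative to $m_s$ and $L_0-m_s$.
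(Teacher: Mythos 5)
Your plan — bound each of $|\phi_s(n)|$ and $|\phi_s(\ell)|$ by the two-bump estimate of Corollary~\ref{keycor}, bound the height of the secondary bump via the resonant case of Theorem~\ref{Keytheorem}, multiply, and absorb the residual $e^{\varepsilon k}$ — uses the same three input tools as the paper, but the paper never forms the four-term product; it works case by case on the positions of $n$, $\ell$ relative to $m_s$ and $L_0$, in each case directly deriving a bound of the form $|\phi_s(n)|\le e^{-c|n|}$. Your unified version has two genuine gaps, one conceptual and one algebraic, and a third issue in the absorption step; all three trace back to the same source.

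\textbf{Gap 1 (secondary-peak estimate).} To get $|\phi_s(L_0-m_s)|\le e^{-(\ln\lambda-\delta-C\varepsilon)|L_0-2m_s|}|\phi_s(m_s)|$ from the resonant bullet of Theorem~\ref{Keytheorem}, you need $|\sin\pi(2\theta+L_0\alpha)|\ge e^{-\eta|L_0-2m_s|}$ for some $\eta<\ln\lambda-\varepsilon$. Inequality~\eqref{Defdelta} only gives $|\sin\pi(2\theta+L_0\alpha)|\gtrsim e^{-(\delta+\varepsilon)|L_0|}$, and the ratio $|L_0|/|L_0-2m_s|$ is \emph{not} bounded in your setting: when $L_0$ is close to $2m_s$, it can be arbitrarily large, so the required $\eta$ exceeds $\ln\lambda$ and the theorem does not apply. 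The paper avoids exactly this by only invoking the resonant bullet in regimes where $|L_0|$ is provably comparable to $|n-m_s|$ — its Case $3_2$ imposes $|n|\le |m_s|/C$ and $|L_0-m_s|\le 2|m_s|/C$ with $C$ large, which pins $|L_0|/|n-m_s|$ close to $1$. You would need to split the position of $L_0$ relative to $2m_s$ (e.g.\ treat $|L_0-2m_s|\lesssim\frac{\delta}{\ln\lambda}|L_0|$ separately, where the secondary peak is so near $m_s$ that it need not be estimated at all) before the quoted inequality becomes legitimate.

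\textbf{Gap 2 (cross-term algebra).} Your claim that the triangle-inequality cost $e^{(\ln\lambda-\varepsilon)|L_0-2m_s|}$ is ``more than compensated'' by the secondary-peak factor $e^{-(\ln\lambda-\delta-C\varepsilon)|L_0-2m_s|}$ is false: the product is $e^{(\delta+(C-1)\varepsilon)|L_0-2m_s|}$, a \emph{positive} exponent for $\delta>0$, so nothing is compensated. The cross term can still be closed — by lowering all three exponents to the common rate $\ln\lambda-\delta-C\varepsilon$ and applying the chain inequality $|n-m_s|+|L_0-2m_s|+|\ell-(L_0-m_s)|\ge|n-\ell|$ — but the resulting rate is $C_1=\ln\lambda-\delta-C\varepsilon$, not $\ln\lambda-\varepsilon$ as you state. (This is exactly the rate appearing in the paper's Case~3, so it is the correct answer; your bookkeeping just claims something stronger and wrong on the way there.)

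\textbf{Issue 3 (absorption step).} For $|n|<|m_s|/2$ you assert that Theorem~\ref{Keytheorem} ``already forces'' $|\phi_s(n)|\le e^{-(\ln\lambda-\delta-C\varepsilon)|m_s|/2}$. This again invokes the resonant bullet, and in this regime the ratio $|2m_s+k_0|/|n-m_s|$ can be as large as roughly $10$, so the resulting $\eta$ is of size $\sim 10\delta$, which exceeds $\ln\lambda$ unless $\delta$ is quite small. It is not a trivial consequence. The paper instead restricts to $|n|\le |m_s|/C$ with $C$ chosen large to keep the ratio near $1$, and handles the remaining intermediate window $|m_s|/C\le|n|\le C|m_s|$ by a separate (essentially free) case where $|n-\ell|\lesssim|n|$ so $-C_1|n-\ell|+C_2|n|$ is positive anyway. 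You need an analogous intermediate case rather than the dichotomy at $|m_s|/2$.

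In short, the strategy is viable and morally close to the paper's, but as written it silently assumes the secondary-peak height bound in a regime where Theorem~\ref{Keytheorem} does not deliver it, and the cross-term arithmetic contradicts itself. Fixing both requires precisely the case discrimination on the relative sizes of $|n|,|\ell|,|m_s|,|L_0-m_s|$ that the paper's proof carries out explicitly.
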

\begin{proof}
	%By changing $C$ in \eqref{gedl}, we can assume $m_s$ is large enough. 
 We split the proof into several cases.
 
{\bf Case 1}: $|n|\geq C |m_s|$ or $|\ell|\geq C |m_s|$

Without loss of generality, assume  $|n|\geq C |m_s|$.
By \eqref{Defdelta}, one has that for some $\eta$ with $\delta<\eta<\ln 
\lambda$, 
\begin{equation}\label{g20newb}
|\sin\pi(2\theta+(2m_s+n)\alpha )| \geq e^{-\eta |n-m_s|},
\end{equation}
for any $|n|\geq  C|m_s|$.
By
Theorem \ref{Keytheorem}, we have  for all $n$ with $|n|\geq  C |m_s|$,
\begin{equation}\label{gedl1}
|\phi_s(n)|\leq  e^{-(\ln\lambda-\eta-\varepsilon)|n|}.
\end{equation}
If $|\ell|\leq \frac{1}{100} |m_s|$, by \eqref{gedl1},\eqref{gedl} holds.

If $|\ell|\geq \frac{1}{100} |m_s|$, by  \eqref{gedl1}, one has 
\begin{equation}\label{ngedl2}
|\phi_s(\ell)|\leq  e^{-(\ln\lambda-\eta-\varepsilon)|\ell|}.
\end{equation}

By \eqref{gedl1} and \eqref{ngedl2},
  \eqref{gedl}  holds.

{\bf Case 2}: $ \frac{1}{C}  |m_s| \leq |n|\leq C|m_s|$, $|\ell|  \leq  C |m_s|$ 
or 
$ \frac{1}{C}  |m_s| \leq |\ell |\leq C|m_s|$, $|n|  \leq  C |m_s|$ 

In this case,  \eqref{gedl} holds immediately (noting that $C_2$ is much larger than $C_1$).

{\bf Case 3}: $|n|\leq \frac{1}{C}  |m_s|$, $|\ell|  \leq \frac{1}{C}|m_s|$

Let $L_0\in[-C|m_s|,C|m_s|]$  be such that
\begin{equation}
|\sin\pi(2\theta+L_0\alpha)|=\min_{|x|\leq  C|m_s|}
|\sin\pi(2\theta+\alpha  x))|.
\end{equation}

{\bf Case $3_1$}: $|L_0-m_s|\geq \frac{2}{C}  |m_s|.$ Then $|m_s-n|>|n|, |L_0-m_s-n|>|n|$ and similarly for $\ell.$

Thus, by  Corollary \ref{keycor}, one has that
\begin{equation}\label{newdecay22}
             |\phi_s(\ell)|\leq  e^{-(\ln\lambda -\varepsilon)|\ell|}, |\phi_s(n)|\leq  e^{-(\ln\lambda -\varepsilon)|n|}.
             \end{equation}
             This implies \eqref{gedl}.

  {\bf Case $3_2$}: $|L_0-m_s|\leq \frac{2}{C}  |m_s|$

In this case, 
by \eqref{Defdelta}, one has that for some $\eta$ with $\delta<\eta<\ln 
\lambda$, 
\begin{equation}\label{g20new}
|\sin\pi(2\theta+ L_0\alpha )| \geq e^{-\eta |n-m_s|},
\end{equation}
for any $|n|\leq  \frac{1}{C}|m_s|$.
 
By Theorem \ref{Keytheorem},  one has that
for all $n$ with $|n|\leq \frac{1}{C} |m_s|$,
\begin{equation}
|\phi_s(n)| \leq   e^{-(\ln\lambda -\varepsilon-\eta)|n-m_s|} \leq  e^{-\frac{1}{2}(\ln\lambda -\eta) |n|}.
\end{equation}
This implies \eqref{gedl}.

\end{proof}
\begin{proof}[\bf Proof of the EDL statement in part (2) of Theorem \ref{amcrit}]
By Theorem \ref{Keytheorem}, for small $m_s$ (the smallness depends on $\lambda,\alpha $, and $\theta$),
\begin{equation}\label{gbo6new}
   | \phi_s(n)|\leq C(\lambda,\alpha,\theta,\varepsilon) e^{-(\ln \lambda-\delta-\varepsilon) |n|}.
\end{equation}
By \eqref{gbo6new} and  Theorem \ref{edl}, one has that
for all $s$, $n$ and $\ell$,
\begin{equation}\label{gedlnewo}
	|\phi_s(n)\phi_s(\ell)|\leq C(\lambda,\alpha,\theta)e^{-C_1|n-\ell|+C_2|n|}.
	\end{equation}
 
From this, obtaining EDL is by now standard. In particular,  \eqref{gedlnewo} is precisely the input of Theorem 6.5 of \cite{tchecmp01}, with the output being EDL. 
\end{proof}
\begin{proof}[\bf Proof of the SULE  statement in part (2) of Theorem \ref{amcrit}] 

We are going to prove SULE when $\delta(\alpha,\theta)=0$
and $\lambda>1$.
Let $b>0$ be arbitrary small. In this proof, the largeness of  $C$ depends on $b$. 
Fix  any $\phi_s$.  If $|n|\geq C|m_s|$,  \eqref{gedl1} holds.
Therefore, we only need to consider  $|n|\leq C|m_s|$.

For any $n$ with $|n-m_s|\geq b|m_s|$, one has
\begin{equation*}
\min_{|x|\leq  C|m_s|}
|\sin\pi(2\theta+\alpha  x)|\geq  e^{-\frac{b}{2}|n-m_s|}.
\end{equation*}
By Theorem \ref{Keytheorem}, one has that for any $n$ with  $|n-m_s|\geq  b |m_s|$, 
\begin{equation*}
|\phi_s(n) |\leq e^{-  (\ln\lambda -b )|n-m_s|}.
\end{equation*}
It implies that  for any large $n\in \Z$,
\begin{equation*}
|\phi_s(n) |\leq  e^{b|m_s|-  (\ln\lambda-b)|n-m_s|}.
\end{equation*}
\end{proof}
 
 \section*{Acknowledgments}
%The authors are very grateful to the anonymous referees for their knowledgeable reports, which helped us to improve our manuscript.

W. Liu was a 2024-2025 Simons fellow.
 SJ’s
work was supported by NSF DMS-2052899, DMS-2155211, and Simons 896624. WL's work was supported   by NSF DMS-2246031,  DMS-2052572 and  DMS-2000345. LM's work was supported by  NNSFC-12371241 and  NSFSP-ZR2023MA032.
L. Mi and W. Liu would like to thank the math department at UC Irvine where this work was started, and W. Liu would like to thank the hospitality of the Department of Mathematics at UC Berkeley where this work was finished during his visit.

\end{document}